\newtheorem{theorem}{Theorem}
\newtheorem{corollary}{Corollary}
\newtheorem{lemma}{Lemma}
\newtheorem{claim}{Claim}
\newtheorem{definition}{Definition}
\newtheorem{remark}{Remark}
\newtheorem{assumption}{Assumption}
\newcommand{\isit}[1]{{#1}}
\newcommand{\revision}[1]{{#1}}
\newcommand{\secondrevision}[1]{{#1}}
\newcommand{\thirdrevision}[1]{{#1}}
\journal{European Journal of Operational Research}
\begin{document}

\begin{frontmatter}


\author[a]{Akhil Bhimaraju\corref{cor1}}
\ead{akhilb3@illinois.edu}
\author[a]{S. Rasoul Etesami}
\ead{etesami1@illinois.edu}
\author[a]{Lav R. Varshney}
\ead{varshney@illinois.edu}

\tnotetext[ab]{This work was supported in part by NSF grants ECCS-2033900 and EPCN-1944403, the Center for Pathogen Diagnostics through the ZJU-UIUC DESIRE enterprise, and the Air Force Office of Scientific Research under grant FA9550-23-1-0107.}

\cortext[cor1]{Corresponding author}
\affiliation[a]{organization={University of Illinois Urbana-Champaign},
              city={Urbana},
              state={IL},
              postcode={61801},
            country={USA}
            }

\title{Dynamic Batching of Online Arrivals to Leverage Economies of Scale}




\begin{abstract}
Many settings, such as matching riders to drivers in ride-hailing platforms or in-stream video advertising,  require handling arrivals over time. In such applications, it is often beneficial to group the arriving orders or requests into batches and process the larger batches rather than individual arrivals. However, waiting too long to create larger batches incurs a
waiting cost for past arrivals. On the other hand, processing the arrivals too soon leads to higher processing
costs by missing the economies of scale of grouping larger numbers of arrivals
into larger batches. 
\secondrevision{Moreover, the timing of the next arrival is often unknown, meaning fixed-size batches or fixed \thirdrevision{waiting} 
times tend to be poor choices.} 
\revision{In this work, we consider the problem of finding the optimal batching schedule to minimize the 
\thirdrevision{sum of waiting time and processing cost} 
under both offline and online settings.}
In the offline problem in which all arrival times are known a priori,  
we show that the optimal batching schedule can be found in polynomial time by reducing it to a shortest path problem on a weighted acyclic graph. 
For the online problem with unknown arrival times, we develop algorithms that are provably competitive for a broad range of processing-cost functions. 
We also provide a 
lower bound on the competitive ratio that no online algorithm can beat.
Finally, we run numerical experiments on simulated and real data
to demonstrate the effectiveness of our algorithms against the offline benchmark.
\end{abstract}



\begin{keyword}
Scheduling \sep Combinatorial optimization \sep Dynamic programming \sep Online optimization



\end{keyword}

\end{frontmatter}


\section{Introduction}
\label{sec:introduction}

There are many applications where a service provider processes requests that arrive over time. The service provider can often benefit from economies of scale \citep{Smith1776} if they wait to accumulate a large number of requests. However, waiting too long incurs a higher \emph{waiting cost} on the requests that arrived earlier and are waiting in the queue to be processed. This leads to a trade-off: either process arrivals soon after they arrive and incur a low waiting cost or wait longer to accumulate larger numbers of arrivals, incurring a higher waiting cost but lowering the \emph{processing cost} via economies of scale.
Further, how long the service provider needs to wait until the next arrival is not always known. For instance, the arrival process is often stochastic, with potentially difficult-to-characterize statistics.
This necessitates a principled approach to determining when the service provider should process batches of arrivals and whether waiting longer for more arrivals would be beneficial to serve the requests at a lower cost.
There are many motivating applications that one can consider, and the following are only a few examples.

\subsection*{Matching riders to drivers in ride-hailing platforms}

While early ride-hailing platforms matched riders to the closest driver as soon as they requested a ride, this led to inefficiency in the overall matching. The following is a quote from \citet{Uber2023}:
\begin{quote}
    ``But if we wait just a few seconds after a request, it can make a big difference. It's enough time for a batch of potential rider-driver matches to accumulate. The result is better matches, and everyone's collective wait time is shorter.''
\end{quote}
A similar batched matching framework has also been used for other ride-hailing platforms, such as Didi Chuxing \citep{ZhangHMWZFGY2017}.
This idea also applies to ride-\emph{sharing}, where a larger batch of people needing shared rides can be more efficiently matched with each other \citep{Lyft2016}. More recently, \citet{XieMX2022} have studied how a fixed delay in making matching decisions can improve the overall matching efficiency. 

\secondrevision{While the aforementioned works study the problem of how to match users optimally,
either after a fixed delay or with a fixed batch size, this can be suboptimal in general.}
For fixed-size batches, the final few arrivals might take a very long time to arrive,
incurring a very high waiting cost for the arrivals of the past.
This means we need to process the batch without waiting too long for a fixed number of arrivals.
Similarly for fixed-delay scheduling, if we already have a very large number of arrivals, 
making them wait for a fixed amount of time is suboptimal, and it is instead better to process
them after a relatively shorter wait.
Depending on the number of arrivals and the processing-cost function, fixed-size batches and
fixed-delay scheduling can even incur an unbounded gap with the optimal schedule that processes batches
of any size at any time.%
\footnote{\secondrevision{For example, consider fixed-size batches of size $k$ and a processing-cost function that is square root in the number of arrivals.
If $nk$ samples arrive in quick succession, the optimal cost would be $\sqrt{nk}$, while the fixed-size batch algorithm would incur a cost of $n\sqrt{k}$.
This leads to a ratio of $\sqrt{n}$, which can be arbitrarily large as $n$ increases.}}

\subsection*{In-stream video advertising}

At any moment, online video platforms that display ads need to match the set of online users with the set of ads they wish to serve. 
Since businesses that advertise on the platform have finite budgets, the number of times a particular ad can be displayed is also limited. 
Batching a larger number of users (i.e., consumers of video) and matching the set of available ads to them is more efficient compared to 
deciding which ad to serve to a user as soon as they arrive. 
However, using larger batches requires the platform to wait longer for more arrivals, which incurs a latency cost. 
\revision{While \citet{FengN2022} study the batching vs. inefficiency trade-off in this problem and theoretically characterize the efficiency gains of using larger batches, it is not clear why constant-sized batches must be used or how long one should wait before deciding to perform the matching.}
In this work, we aim to address when it is better to perform the matching
and when it is better to wait longer for more arrivals. 
By carefully managing the waiting time and processing cost, we achieve a constant competitive ratio with respect to the 
optimal offline batching schedule computed in hindsight with full information about future arrivals (see Sec.~\ref{sec:4-comp}).
\revision{As we have seen in the preceding example, this is not possible for constant-sized batches,
which can incur an unbounded gap with respect to the optimal offline cost.}

\subsection*{Real-time processing of IoT data streams}

IoT (Internet of Things) devices generate vast amounts of data, often in multiple streams from different sensors and devices \citep{AhmedYHKAIV2017}.
The data can be processed either using edge computing in the sensing device, in a collection of closely related devices using fog computing,
or in a remotely-located powerful compute cluster using cloud computing.
Processing each data point as it arrives can be inefficient due to the overhead of resource initialization and data transfer.
Further, frameworks based on MapReduce \citep{DeanG2008} like Apache Spark \citep{KarauKWZ2015} tend to be more efficient when processing larger batches of data.
However, waiting too long for sample arrivals introduces latency into the system, which can be undesirable, but not necessarily fatal \citep{FerrariSBDR2017}.
In these applications, there is a need to \thirdrevision{trade off} 
the benefits of batch processing with the downsides of increased latency in a principled manner.
We believe our framework is a step in this direction.

\subsection{Other related work}

Our model is a generalization of the one developed by \citet{BhimarajuV2022}, who study this problem specific to a group-testing application.
Group testing was first developed to efficiently test for syphilis among soldiers by pooling samples from multiple individuals \citep{Dorfman1943}.
If a pooled sample tests negative, every individual in the pool can be declared to be free of the infection.
By pooling intelligently, the number of tests needed to test a given group of individuals can be much smaller than naively testing
every one of them separately \citep{ChanJSA2014, AldridgeBJ2014, ScarlettJ2020, AldridgeJS2019}.
\revision{While our mathematical model has some similarity to that of \citet{BhimarajuV2022}, our results are significantly better:
our model is much more general, our algorithm's performance guarantees are better, and our lower bound that no online algorithm can beat is higher.}
Our analysis also improves the guarantee on the algorithm of \citet{BhimarajuV2022} over what they show in their work.
Moreover, we give a polynomial-time offline optimal algorithm for computing the best batching schedule, which is absent
from \citet{BhimarajuV2022}.

A related line of work is batch-service queuing 
\citep{DebS1973, NeutsC1987, BarlevPPSV2007, ClaeysWLB2010, ClaeysSWLB2013, ChakravarthySR2021},
where large enough batches of Poisson arrivals are accumulated before processing.
While prior work has largely focused on finding the optimal batch size for a given arrival process,
we make no assumptions on the arrival statistics, and our algorithms can adapt to any static or time-varying arrival process.

Our mathematical model is also a generalization of the TCP acknowledgement problem \citep{DoolyGS2001, KarlinKR2001}.
TCP is a widely-used transport-layer protocol on the internet and requires an acknowledgement to be sent for each
received TCP packet.
However, multiple packets can be acknowledged together, which makes it possible to reduce the bandwidth consumed by waiting
for future packets and acknowledging a larger batch of packets together.
However, TCP employs a congestion-control algorithm \citep{Stevens1997}, and waiting too long to acknowledge can make the algorithm perform suboptimally.
The TCP acknowledgement problem is a trade-off between minimizing the bandwidth consumed, by sending fewer acknowledgements, and
not waiting too long to send an acknowledgement for each packet.
Since any number of acknowledgements can be grouped together, the ``processing-cost function'' in this case is a constant.
Our model, on the other hand, generalizes this to any arbitrary submodular function (please see Sec.~\ref{sec:model} for the details),
and our performance guarantees and lower bound are expressed in terms of a parameter that captures this function's curvature.

\revision{Relatedly, burn-in during the testing of semiconductor chips has been modeled as a batching problem \citep{LeeUM1992}.
The approach we use for computing the optimal offline schedule (Sec.~\ref{sec:offline}) bears similarity to the ones used in this context
\citep{LiL1997}. 
\secondrevision{While the offline problems considered here minimize a function of ``tardiness,'' online algorithms have been developed for minimizing the makespan \citep{ZhangCW2001, DengPZ2003, YuanLTF2009best}.} 
However, this does not generalize to our setting, where we consider the objective as the sum of the \thirdrevision{waiting} 
times and the processing cost. Further, our analysis works for arbitrary submodular processing costs, while tardiness and makespan are functions of a scalar processing time associated with each task.%
\footnote{\revision{This problem turns out to be NP-hard under relatively mild assumptions such as unsorted processing times \citep{LiL1997}.}}
Please see \citet{PottsK2000} for a review of variants of the offline batching problem.}

The techniques we use in our analysis bear similarity to those used to analyse job scheduling \citep{Graham1966, HallSSW1997},
where the objective is to minimize a function of the completion times of the jobs on a given number of processors.
``Speed scaling'' is an extension of the job scheduling problem which aims to minimize a combination of the \thirdrevision{waiting} 
times plus
the total energy consumed \citep{AlbersF2007, BansalPS2010, DevanurH2018}, which is
\isit{
similar to our formulation of minimizing the \thirdrevision{waiting} 
times plus the processing cost.}
\revision{However, practical considerations suggest a submodular processing cost (see Sec. \ref{sec:model}), which is incurred at discrete time instants when batching occurs, making the convex programs and primal-dual analysis, such as those used by \citet{HallSSW1997} and \citet{DevanurH2018}, not immediately applicable to our setting.}

\subsection{Contributions}
Motivated by the above and many other applications, here we formulate a general problem in the form of minimization 
of the average \thirdrevision{waiting} 
time of the arrivals plus the average processing cost. 
\revision{We aim to develop online algorithms that assume no information about future arrivals and make irrevocable batching decisions
while updating the known information as new requests arrive.} %
\revision{To evaluate our online algorithm's performance, we use the competitive ratio, 
which is the worst-case cost ratio between the online algorithm and the optimal offline schedule
computed in hindsight with full information about all the arrival times.}
\isit{Thus, if an online algorithm has a certain competitive ratio, then the cost of running the algorithm is never worse than this ratio times the optimal offline cost. 
This also means that our performance guarantee holds in an adversarial setting regardless of potentially time-varying statistics of the arrival process.} 
Moreover, our competitive ratio does not depend on the total number of arrivals and holds for every possible instance of the problem. 
The main contributions of our work are summarized below.
\begin{itemize}
\item \revision{We provide a general formulation to characterize the trade-off between the waiting time and the processing cost in batching problems with online arrivals.} 
\item \revision{We show that the offline problem, in which all the arrival times are known a priori, can be solved in polynomial time by interpreting it as a shortest path problem on a graph (Theorem~\ref{thm:offline}).}
\item We develop a constant competitive algorithm for the online problem without the knowledge of future arrivals (Theorem~\ref{thm:wte} and Corollary~\ref{cor:wte}).
\revision{As far as we know, this is the first work to consider a general submodular function for the processing cost in scheduling problems of the type considered here.}
\item We provide a lower bound for the competitive ratio of any online algorithm (Theorem \ref{thm:lower-bound}). 
\item We evaluate the performance of our online algorithm using extensive numerical experiments.   
\end{itemize}


\subsection{Organization}
The remainder of this paper is organized as follows. Sec.~\ref{sec:model} describes the system model, objective, and assumptions
on the processing-cost function. Sec.~\ref{sec:offline} gives a polynomial-time offline algorithm for computing
the best batching solution in hindsight, using full information about all the 
arrival times. This serves as a benchmark for online algorithms and also helps to evaluate the performance of online algorithms numerically in a reasonable computation time. Sec.~\ref{sec:4-comp} presents our online algorithm, which balances the
waiting and processing costs to achieve a small competitive ratio.
Sec.~\ref{sec:lower-bound} gives a lower bound on the competitive ratio that any online algorithm might achieve. Sec.~\ref{sec:simulations} shows the empirical performance of our algorithms using numerical experiments,
and Sec.~\ref{sec:conclusion} concludes the paper by identifying some future research directions.

\section{Model}
\label{sec:model}

\revision{We consider a system that has to process a sequence of $n$ \emph{samples} that arrive at times \(a_1 \le a_2 \le \cdots \le a_n\),
where $a_i\ge0$ for all $1\le i\le n$.}
\revision{Sample $i$, which arrives at time $a_i$, is associated with a feature vector $v_i\in\mathcal{F}$ 
drawn from some finite set $\mathcal{F}$, which determines the eventual cost of processing that sample.}
For example, in the ride-hailing application, the feature vector could be the current location, rating, and destination of the rider requesting a ride.
\isit{
Using an algorithm \(\textsc{alg}\), the samples are processed in \(m^\textsc{alg}\) batches 
\(\mathcal{B}^\textsc{alg}_1,
\mathcal{B}^\textsc{alg}_2, \ldots, \mathcal{B}^\textsc{alg}_{m^\textsc{alg}}\) at times 
\(s^\textsc{alg}_1< s^\textsc{alg}_2< \cdots< s^\textsc{alg}_{m^\textsc{alg}}\), where
\(\mathcal{B}^\textsc{alg}_j\) is the set of samples processed together at time \(s^\textsc{alg}_j\),
with \(\mathcal{B}^\textsc{alg}_i\cap\mathcal{B}^\textsc{alg}_j=\emptyset\) for
\(i \neq j\).
\revision{The non-intersection condition ensures that no sample is processed more than once.}
For each sample \(i\), define \(d^\textsc{alg}_i\) as the time when it gets processed by \(\textsc{alg}\), i.e., 
\(d^\textsc{alg}_i = s^\textsc{alg}_j\) for \(i\in\mathcal{B}^\textsc{alg}_j\).
Note that a sample can only be processed after it has arrived. Therefore, for \(\textsc{alg}\)
to be valid, we must have \(d^\textsc{alg}_i \ge a_i\) for all \(1\le i \le n\).}
Let \(f:2^\mathcal{F}\mapsto\mathbb{R}^+\cup\{0\}\) be a set function such that $f(\mathcal{X})$ denotes the cost
incurred when we process samples with the set of feature vectors \(\mathcal{X}\subseteq\mathcal{F}\) together in a single batch.%
\footnote{\revision{Note that if two samples have identical feature vectors, $\mathcal{X}$ cannot be a simple set union of the feature vectors, since both samples need to be processed. Therefore, $\mathcal{X}$ must denote a multiset rather than a set. For ease of exposition, we still use the standard set notation, with the understanding that multisets do not necessarily share all the properties of sets. For example, $\mathcal{X} \cup \mathcal{X} \neq \mathcal{X}$.
}}
We expect the processing of a larger batch to have a lower per-sample processing cost
than multiple smaller batches due to economies of scale.
\revision{Further, we expect the processing of a larger batch to incur a higher total cost than a smaller subset of that batch.}
These properties lead us to the following assumption.

\begin{assumption}
We assume that the cost function \(f:2^\mathcal{F}\mapsto\mathbb{R}^+\cup\{0\}\) satisfies the following conditions:
\begin{itemize}
\item[(i)] \(f(\emptyset)=0\);
\item[(ii)] \(f(\mathcal{Y})\le f(\mathcal{X})\) if \(\mathcal{Y}\subseteq\mathcal{X}\) for $\mathcal{X},\mathcal{Y}\subseteq\mathcal{F}$;
\item[(iii)] $f(\mathcal{X}\cup\mathcal{Y})\le f(\mathcal{X})+f(\mathcal{Y})$ for $\mathcal{X},\mathcal{Y}\subseteq\mathcal{F}$.
\end{itemize}
\label{assm:f-assumptions}
\end{assumption}

For a given problem instance, the cost of processing samples $\{k,k+1,\ldots,k+r-1\}$ together
is equal to $f(\{v_k,v_{k+1},\ldots,v_{k+r-1}\})$.
To simplify notation, we also use $f(\{k,k+1,\ldots,k+r-1\})$ to denote this quantity when the meaning is clear.
\isit{We define the cost of algorithm \(\textsc{alg}\) on 
an input instance, denoted by \(J^\textsc{alg}\), as the per-sample average%
\footnote{\revision{We could alternatively use the total cost of the algorithm rather than the average, but the average allows us to directly compare the objective cost on problem instances with different numbers of samples.}} 
of the total waiting time 
of all the samples plus the processing cost, i.e.,\footnote{Note that we only consider the time the sample is kept waiting
until the processing starts.
Any ``waiting time'' during the processing can be included in the
cost function \(f\).
}}
\begin{align}
J^\textsc{alg} = 
\frac{1}{n}\left(\sum_{i=1}^n(d_i-a_i)+\sum_{j=1}^{m^\textsc{alg}}f\left(
\mathcal{B}^\textsc{alg}_j\right)\right).
\label{eq:jalg-definition}
\end{align}

To ensure that the processing schedule is equitable to the samples (and for mathematical tractability),
we assume that the arrivals are processed in order, i.e., a sample which arrived before another sample must be
processed no later than the latter sample.
We state this formally as Assumption~\ref{assm:consecutive-sets}.
\begin{assumption}
    A valid algorithm \textsc{alg} processes samples $i_1$ and $i_2$ such that if $a_{i_1}\le a_{i_2}$,
    then $d_{i_1}^\textsc{alg}\le d_{i_2}^\textsc{alg}$, i.e., for $i_1\in\mathcal{B}^\textsc{alg}_{j_1}$ and $i_2\in\mathcal{B}^\textsc{alg}_{j_2}$,
    we have $s_{j_1}^\textsc{alg}\le s_{j_2}^\textsc{alg}$.
    \label{assm:consecutive-sets}
\end{assumption}

\isit{An algorithm \textsc{alg} is \emph{online} if in determining to 
process the batch \(\mathcal{B}^\textsc{alg}_j\)
at time \(s^\textsc{alg}_j\), \textsc{alg} makes no use of information about any 
future samples in \(\{i:a_i>s^\textsc{alg}_j\}\), nor can it change its past decisions.}
In other words, an online algorithm \textsc{alg} makes its decisions irrevocably and purely based on the past observed samples. Moreover, we let \(\textsc{opt}\) denote the optimal \emph{offline} algorithm that selects the batches \(\mathcal{B}\) to minimize the objective cost \eqref{eq:jalg-definition} assuming full knowledge of the arrival times. We denote the cost of the optimal offline algorithm by \(J^\textsc{opt}=\min_\textsc{alg}J^\textsc{alg}\). 

To analyze the performance of our devised online algorithms, as is conventional in the literature of online optimization, we adopt the notion of competitive ratio, as defined next.

\begin{definition}
An online algorithm \(\textsc{alg}\) is said to be \(\rho\)\emph{-competitive} for \(\rho\ge1\) if the supremum of \(\frac{J^\textsc{alg}}{J^\textsc{opt}}\) over all problem instances \(\{(a_1,v_1), \ldots, (a_n,v_n)\}\) for all
\(n\) is less than or equal to \(\rho\), in which case we refer to \(\rho\) as a  \emph{competitive ratio} of \textsc{alg}.
\end{definition}

Our main objective in this paper is to design online algorithms that provably admit a small (and constant) competitive ratio and
to establish nearly matching lower bounds on the competitive ratio that no online algorithm can 
\secondrevision{beat.}
To that end, in the next section, we first consider the offline problem with known arrival times, which serves as the benchmark for our subsequent competitive ratio analysis.

\section{The Offline Problem}
\label{sec:offline}

In this section, we consider the offline version of the problem, in which all the samples' arrival times $a_1,a_2,\ldots,a_n$
and their corresponding feature vectors $v_1,v_2,\ldots, v_n$ are known, and we wish to compute the optimal schedule that minimizes the cost given by \eqref{eq:jalg-definition}.
\revision{Developing an effective algorithm for the offline problem helps us empirically evaluate the performance of our online algorithms, as we can then compute the best (offline) schedule in reasonable time and estimate the competitive ratio through simulation.}
In fact, for some practical applications such as group testing 
\revision{\citep{BhimarajuV2022},} 
it is not implausible that the testing center might know when the samples might arrive if people have booked testing slots in advance. In this case, one can use the offline algorithm to compute the best batching schedule and run it as the samples arrive at their given times.%
\footnote{A related application in packing shipping containers has been considered by \citet{WeiKJ2021}.}
While we can find the optimal offline schedule using brute force by computing every possible schedule and its corresponding cost,
this is an extremely inefficient solution that requires exponential computational time. While it is tempting to infer that solving the optimal offline problem is NP-hard, as for similar problems \citep{Ullman1975,GareyJ1975}, we show this is not the case here.%
\footnote{Assuming the function $f(\mathcal{X})$ can be computed in time polynomial in $|\mathcal{X}|$.}

We propose the \textsc{OffShortPath} (\textsc{OSP}) algorithm 
based on \citet{ChakravartyOR1982} for the offline problem.
While the \emph{concave in subset-sum for fixed cardinality} property of \citet{ChakravartyOR1982} does not appear to be applicable here, we do not require this property as we see in the proof of Theorem~\ref{thm:offline}.
The offline algorithm starts by constructing a weighted directed graph $G=(Q,E)$ with $|Q|=n+1$ nodes, where
node $q_i\in Q$ corresponds to sample $i$ (the sample which arrived at time $t=a_i$) for $i\in\{1,2,\ldots,n\}$.
We also include an additional node $q_{n+1}$, which denotes the end state after all the samples
have arrived (its purpose will become clear in Alg.~\ref{alg:osp}).
A weight of $e_{ij}$ is assigned to edge $(q_i,q_j,e_{ij})\in E$ for $i<j$ as follows:
\begin{align}
    e_{ij} := f(\{v_i,v_{i+1},\ldots,v_{j-1}\}) + \sum_{k=i}^{j-1} (a_{j-1}-a_k)
        \quad \text{for}\ i < j.
\label{eq:edge-weight}
\end{align}
Effectively, $e_{ij}$ can be thought as the unnormalized cost of processing the samples
$\{i,{i+1},\ldots,{j-1}\}$ in a single batch at time $t=a_{j-1}$.
The solution to the optimal schedule is given by computing the
minimum weight path (a.k.a. shortest path) from $q_1$ to $q_{n+1}$ in this weighted acyclic graph.
If the minimum weight path is given by $(q_1=:q_{r_1},q_{r_2},\ldots,q_{r_p}:=q_{n+1})$,
the optimum solution is given by processing in $p-1$ batches, where
the $j$th batch $\mathcal{B}_j:=\{{r_j},{r_j+1},\ldots,{r_{j+1}-1}\}$
is processed at $d_j=a_{r_{j+1}-1}$.
\secondrevision{The minimum weight path can be computed using Dijkstra's algorithm or a simpler topological sort
followed by a dynamic program \citep{CormenLRS2022}.}
We state this algorithm formally as Alg.~\ref{alg:osp}, whose computational complexity is analyzed in Theorem~\ref{thm:offline}.

\begin{algorithm}[h!]
\caption{\textsc{OffShortPath} (\textsc{OSP})}
\label{alg:osp}
\textbf{Input:} \(\{a_1,a_2,\ldots,a_n\}\), \(f:2^\mathcal{F}\mapsto\mathbb{R}^+\cup\{0\}\)\\
\textbf{Output:} $m$, $(\mathcal{B}_1,\mathcal{B}_2,\ldots,\mathcal{B}_m)$, $(s_1,s_2,\ldots,s_m)$\\
\textbf{Initialize:} Nodes $Q=\{q_1,q_2,\ldots,q_{n+1}\}$, edges $E=\emptyset$
\begin{algorithmic}[1]
\State Construct the weighted acyclic graph $G=(Q,E)$:
\For {$i<j$ \textbf{in} $(q_i,q_j)\in Q\times Q$}
    \State $e_{ij}\gets f(\{v_i,v_{i+1},\ldots,v_{j-1}\}) + \sum_{k=i}^{j-1} (a_{j-1}-a_k)$
    \State $E\gets E\cup\{(q_i,q_j,e_{ij})\}$
\EndFor
\State $(q_{r_1},\ldots,q_{r_p})\!\gets\!
            \textsc{MinWeightPath}(G,\ \text{from}\ q_1\ \text{to}\ q_{n+1})$
\State $m\gets p-1$
\For {$j\in\{1,2,\ldots,m\}$}
    \State $\mathcal{B}_j\gets \{r_j,r_j+1,\ldots, r_{j+1}-1\}$
    \State $s_j\gets a_{r_{j+1}-1}$
\EndFor
\end{algorithmic}
\end{algorithm}

\begin{theorem}
The algorithm \textsc{OffShortPath} (Alg.~\ref{alg:osp})
gives the optimum offline schedule that minimizes the cost given by \eqref{eq:jalg-definition}.
Further, it can be implemented with a time complexity of $O(n^2T_n)$,
where $n$ is the number of samples and the time complexity of $f(\mathcal{X})$ is $T_{|\mathcal{X}|}$.
\label{thm:offline}
\end{theorem}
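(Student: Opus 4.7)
The plan is to show that every valid offline schedule corresponds to a directed path in $G$ from $q_1$ to $q_{n+1}$ whose total weight equals $n$ times the objective in \eqref{eq:jalg-definition} (after reducing to a canonical choice of processing times), and conversely. The optimization then reduces to a shortest-path computation on an acyclic graph.

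First, I would use Assumption~\ref{assm:consecutive-sets} to argue that every valid batch consists of \emph{consecutive} sample indices: if samples $i<k$ share a processing time and $i<j<k$, then $a_i\le a_j\le a_k$ forces $d_i\le d_j\le d_k$, but $d_i=d_k$ pins $d_j$ to the same value, so $j$ must lie in the same batch. Hence any valid schedule partitions $\{1,\ldots,n\}$ into blocks of consecutive indices $\{r_j,\ldots,r_{j+1}-1\}$. Next, for a fixed block the processing term $f(\{v_{r_j},\ldots,v_{r_{j+1}-1}\})$ does not depend on the batch-completion time $s_j$, while the waiting contribution $\sum_{k=r_j}^{r_{j+1}-1}(s_j-a_k)$ is strictly increasing in $s_j\ge a_{r_{j+1}-1}$. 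Thus, without loss of optimality we may fix $s_j=a_{r_{j+1}-1}$, and the un-normalized cost of the schedule then coincides exactly with $\sum_{j} e_{r_j,\,r_{j+1}}$ as defined in \eqref{eq:edge-weight}.

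With this correspondence in hand, each partition of the samples into consecutive blocks is precisely a path $q_1=q_{r_1}\to q_{r_2}\to\cdots\to q_{r_p}=q_{n+1}$ in $G$, and conversely. A minimum-weight such path therefore yields an optimal schedule, which is exactly what \textsc{OffShortPath} returns. For the complexity, I would first precompute the prefix sums $S_k=\sum_{\ell=1}^{k}a_\ell$ in $O(n)$ time, so that each waiting sum can be evaluated as $(j-i)a_{j-1}-(S_{j-1}-S_{i-1})$ in $O(1)$. Computing the $O(n^2)$ edge weights is then dominated by the $O(T_n)$ cost of evaluating $f$ on each edge's feature multiset, giving $O(n^2 T_n)$ overall. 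Since $G$ is a DAG on $n+1$ vertices, a single topological-order sweep computes the shortest path in $O(|Q|+|E|)=O(n^2)$, which is absorbed into the above bound.

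The main subtlety is the structural reduction: pinning down that batches must be intervals and that the optimal completion time of any batch is the arrival time of its last sample. Once these two observations are established, the rest is a transparent bijection between feasible schedules and source-to-sink paths in $G$, together with a routine implementation analysis.
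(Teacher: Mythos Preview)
Your proposal is correct and follows essentially the same approach as the paper: reduce to consecutive-index batches via Assumption~\ref{assm:consecutive-sets}, fix each batch's processing time to the last arrival in the batch, establish the bijection between schedules and $q_1\to q_{n+1}$ paths in $G$, and then bound the runtime via prefix sums and a DAG shortest-path sweep. Your treatment is in fact slightly more explicit than the paper's in deriving that batches must be intervals and that $s_j=a_{r_{j+1}-1}$ is without loss of optimality, but the structure and ideas are identical.
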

\begin{proof}[Proof.]
Let us first observe that the construction of the graph in Alg.~\ref{alg:osp}
always results in a connected (directed) acyclic graph.
This is because we add edges from $q_i$ to $q_j$ only if $i<j$ (which makes the
graph acyclic), and we add edges for all such $(i,j)$ (which makes the graph connected).
Let an optimal schedule for the problem instance be given by the $m^\textsc{opt}$ batches
$\mathcal{B}_1^\textsc{opt},\mathcal{B}_2^\textsc{opt},\ldots,\mathcal{B}_{m^\textsc{opt}}^\textsc{opt}$
processed at times $s_1,s_2,\ldots,s_{m^\textsc{opt}}$ respectively.
Define $u_j^\textsc{opt}$ for $j\in\{1,2,\ldots,m^\textsc{opt}\}$ as
\begin{align*}
    u_j^\textsc{opt} = q_{\min\{i\mathop{:}i\in\mathcal{B}_j^\textsc{opt}\}}.
\end{align*}
By Assumption~\ref{assm:consecutive-sets}, the optimal schedule is
``consecutive,'' i.e.,
\begin{align*}
    x\in\mathcal{B}_i^\textsc{opt}\ \text{and}\ y\in\mathcal{B}_j^\textsc{opt}\ 
    \text{for}\ i<j \implies x < y\ \forall\ x,y.
\end{align*}
This implies $(u_1^\textsc{opt},u_2^\textsc{opt},\ldots,u_{m^\textsc{opt}}^\textsc{opt},q_{n+1})$
forms a valid (directed) path in the graph $G$.%
\footnote{Note that $u_1^\textsc{opt}=q_1$ by definition.}
Further, we can see that the weight of this path is equal to $n$ times
$J^\textsc{opt}$, the cost of running the optimal schedule.%
\footnote{In an optimal schedule, there is no waiting time after the last sample
in the batch has arrived.}
Using similar arguments, every path in the graph from $q_1$ to $q_{n+1}$
corresponds to a schedule which satisfies Assumption~\ref{assm:consecutive-sets}.
\revision{
    By this equivalence,
    the output of Alg.~\ref{alg:osp} gives us an optimal schedule.
}

\revision{
    In the runtime complexity, the graph construction takes $O(n^2T_n)$ time using a cumulative sum array $S_j=\sum_{k=1}^ja_k$.}
Computing the minimum weight path in a weighted directed acyclic graph can be done in $O(|Q|+|E|)$ using topological sorting \citep{CormenLRS2022},%
\footnote{In fact, we can avoid the sorting step in our case since $(q_1,q_2,\ldots,q_{n+1})$
is a natural topological order.}
which has a complexity of $O(n^2)$ in our problem. This gives us an overall complexity of $O(n^2+n^2T_n)=O(n^2T_n)$.
\end{proof}


\revision{
    Reducing the batch-scheduling problem to a minimum-weight path problem implies that it can also be written as a linear program. Interestingly, solving the dual of this program reduces to solving a dynamic program very similar to the standard one used to find the shortest path in a directed acyclic graph.}
We refer to \ref{sec:linear-program-offline} for more details.
\section{Competitive Online Algorithms}
\label{sec:4-comp}

\secondrevision{We now present our online algorithms for adaptive batching 
and prove their performance guarantees in terms of the competitive ratio.}
\isit{Let us first introduce some useful notation. For a given problem instance and an algorithm \(\textsc{alg}\) (either online or offline), 
let \(W^\textsc{alg}\) and \(F^\textsc{alg}\) denote the average \thirdrevision{waiting} 
time and the average (per-sample) processing cost, respectively:
\begin{align*}
W^\textsc{alg} = \frac{1}{n}\sum_{i=1}^n(d^\textsc{alg}_i-a_i),
\end{align*}
}%
\vspace{-1.5em}
\begin{align*}
F^\textsc{alg} = \frac{1}{n}\sum_{j=1}^{m^\textsc{alg}}f\left(\mathcal{B}^\textsc{alg}_j\right),
\end{align*}
\isit{where we note that $J^\textsc{alg} = W^\textsc{alg} + F^\textsc{alg}$. Moreover, let \(u^\textsc{alg}_t\)
denote the number of samples that have arrived by time \(t\)
but not yet been processed by time \(t\), i.e., 
\begin{align}\nonumber
u^\textsc{alg}_t=\big\lvert\{i:a_i\le t < d^\textsc{alg}_i\}\big\rvert.
\end{align}
Observe that the average \thirdrevision{waiting} 
time can be written as
\begin{align}
W^\textsc{alg} = \frac{1}{n} \int_0^\infty u^\textsc{alg}_\tau d\tau.
\label{eq:wait-time-integral}
\end{align}

\revision{The main trade-off in the batching problem is the following.}
If we wait for a long time before processing, we accumulate many samples, thus decreasing the per-sample processing cost (via batching gains)
but increasing the waiting time. However, suppose we process too aggressively soon after a few samples have arrived. 
In that case, we might lose out on accumulating enough samples and thus incur a high per-sample processing cost even though the waiting
time would be lower.}

We propose the \textsc{WaitTillAlpha} (\textsc{WTA})
algorithm, which balances these two components.
We first compute the cumulative waiting time of arrived samples that are
yet to be processed and compare this with the cost of processing them
all together. Initially, the cumulative \thirdrevision{waiting} 
time would be small, but it would increase as time progresses. Once its value equals $\alpha$ times the cost of processing all these samples together, we process them, where $\alpha$ is a scalar balancing factor that can be optimized to obtain a smaller competitive ratio. We state this formally as Alg.~\ref{alg:wte}, whose performance guarantee is shown in Theorem~\ref{thm:wte}.

{\algrenewcommand\algorithmicthen{{\protect{\textbf{then}\footnotemark}}}
\begin{algorithm}[h!]
\caption{\textsc{WaitTillAlpha} (\textsc{WTA})}
\label{alg:wte}
\textbf{Initialize:} \(b_\text{prev}\leftarrow0\)
\begin{algorithmic}[1]
\State At each time \(t\):
\If {\(\int_{b_\text{prev}}^t \!\!\!u_\tau d\tau = \alpha f\!\left(\{v_i\!:\!a_i\!\in\![0,t]\ \text{and} \ d_i\!\notin\![0,t)\}\right)\)}
    \State Process all the available samples together  at time \(t\)
    \State \(b_\text{prev}\leftarrow t\)
\EndIf
\end{algorithmic}
\end{algorithm}
\footnotetext{For \(b_\text{prev}\neq0\), the right side can be written as \(f\left(\{v_i:a_i\in(b_\text{prev},t]\}\right)\).}
}

\begin{theorem}
The online algorithm \(\textsc{WaitTillAlpha}\) (Alg.~\ref{alg:wte}) admits a competitive
ratio of
\begin{align*}
    \frac{J^\textsc{WTA}}{J^\textsc{opt}}\le \left(1+\frac{1}{\alpha}\right)\max\left\{1,\frac{\alpha}{\Gamma}\right\},
\end{align*}
for any problem instance, where%
\footnote{Recall from Sec.~\ref{sec:model} that $\mathcal{X}$ and $\mathcal{Y}$ are multisets rather
than sets and $\mathcal{X}\cup\mathcal{X}\neq\mathcal{X}$.
We also assume in this definition that $\mathcal{X}$ and $\mathcal{Y}$ are not both simultaneously $0$ to avoid the undefined ratio $0/0$.}
\begin{align*}
\Gamma=\inf_{\mathcal{X},\mathcal{Y}\subseteq\mathcal{F}}\frac{f(\mathcal{X}\cup\mathcal{Y})}{f(\mathcal{X})+f(\mathcal{Y})}.
\end{align*}
\label{thm:wte}
\end{theorem}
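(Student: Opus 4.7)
My plan is to first leverage WTA's own stopping rule to reduce the two-term competitive bound to a single-term one, and then charge WTA's processing cost to OPT via a case split based on whether OPT processes during each interval between consecutive WTA batches.

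\textbf{Stage 1 (collapse the objective).} By the triggering condition of Alg.~\ref{alg:wte}, each WTA batch satisfies $\int_{b_{k-1}}^{b_k} u_\tau^\textsc{WTA}\,d\tau = \alpha f(B^\textsc{WTA}_k)$. Summing over $k$ and substituting into \eqref{eq:wait-time-integral} yields the exact identity $W^\textsc{WTA} = \alpha F^\textsc{WTA}$, hence $J^\textsc{WTA} = (1+1/\alpha)W^\textsc{WTA}$, and the theorem reduces to proving
\[
W^\textsc{WTA} \;\le\; \max\{1,\alpha/\Gamma\}\,J^\textsc{opt}.
\]

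\textbf{Stage 2 (charging).} For each WTA batch $k$ I would split into two subcases. When OPT performs no processing in $(b_{k-1}, b_k]$, every sample of $B^\textsc{WTA}_k$ is processed by OPT at some time $\ge b_k$, so $\sum_{i\in B^\textsc{WTA}_k}(d_i^\textsc{opt}-a_i)\ge\sum_{i\in B^\textsc{WTA}_k}(b_k-a_i)=\alpha f(B^\textsc{WTA}_k)$, and the WTA waiting on this batch is directly dominated by OPT's waiting on the same samples. When OPT does process at some time $s\in(b_{k-1}, b_k]$, letting $B^\textsc{opt}_j$ be OPT's batch at $s$ and $E = B^\textsc{opt}_j\cap B^\textsc{WTA}_k$ the part of $B^\textsc{WTA}_k$ processed there, applying the pairwise $\Gamma$-inequality to the partition $B^\textsc{opt}_j = E\cup(B^\textsc{opt}_j\setminus E)$ gives $f(E)\le f(B^\textsc{opt}_j)/\Gamma$, and combining this across all OPT batches that intersect $B^\textsc{WTA}_k$ via subadditivity bounds $\alpha f(B^\textsc{WTA}_k)$ by $(\alpha/\Gamma)$ times a sum of OPT processing costs.

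\textbf{Stage 3 (aggregation).} Summing both kinds of charges over all WTA batches, and using the consecutive-set structure of Assumption~\ref{assm:consecutive-sets} to argue that each OPT batch is charged at most a bounded number of times on the processing side, gives
\[
\alpha n F^\textsc{WTA} \;=\; n W^\textsc{WTA} \;\le\; n W^\textsc{opt} + \frac{\alpha}{\Gamma}\,n F^\textsc{opt} \;\le\; \max\{1,\alpha/\Gamma\}\,n J^\textsc{opt}.
\]
Combined with Stage 1, this yields the claimed competitive ratio after multiplying by $(1+1/\alpha)$.

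\textbf{Main obstacle.} The delicate step is Stage 2's second case: $\Gamma$ is defined only for pairs of sets, so converting a partition of $B^\textsc{WTA}_k$ that fragments across $p$ OPT batches into a clean $1/\Gamma$ factor — rather than a degrading $1/\Gamma^p$ — requires carefully pairing each ``early'' fragment with only its complement \emph{inside the single relevant OPT batch}, and then verifying that each OPT batch plays this pairing role for at most a constant number of WTA batches. Establishing this non-double-counting cleanly, without iterating $\Gamma$ across fragments, is the technically hardest piece of the argument and is what pins down the precise $\max\{1,\alpha/\Gamma\}$ form of the competitive constant.
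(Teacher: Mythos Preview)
Your Stage~1 reduction and the target inequality $W^\textsc{WTA}\le W^\textsc{opt}+(\alpha/\Gamma)F^\textsc{opt}$ match the paper exactly, and your Case-1 argument is correct. The gap is in Case~2 of Stage~2 together with the aggregation in Stage~3. Your pairing $B_j^\textsc{opt}=E\cup(B_j^\textsc{opt}\setminus E)$ gives $f(E)\le f(B_j^\textsc{opt})/\Gamma$ \emph{per fragment}, but an OPT batch that straddles a WTA boundary is charged by two distinct Case-2 WTA batches: the tail $\mathcal{L}_k$ of $B_k^\textsc{WTA}$ (samples arriving in $(b_{k-1},b_k]$ that OPT processes only afterwards) sits inside an OPT batch processed in some later interval $k'$, and that OPT batch intersects both $B_k^\textsc{WTA}$ and $B_{k'}^\textsc{WTA}$. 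With your per-fragment bound the two charges sum to $2f(B_j^\textsc{opt})/\Gamma$, so Stage~3 delivers at best $W^\textsc{WTA}\le W^\textsc{opt}+(2\alpha/\Gamma)F^\textsc{opt}$, and the competitive constant becomes $(1+1/\alpha)\max\{1,2\alpha/\Gamma\}$, strictly worse than the claim. Your ``Main obstacle'' paragraph correctly identifies exactly this difficulty but does not close it; ``a bounded number of times'' is not enough when the constant is $2$.

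The paper's resolution is an amortization rather than a sharper non-double-counting argument. It splits $B_j^\textsc{WTA}=\tilde{\mathcal{B}}_j\cup\mathcal{L}_j$ and applies the $\Gamma$-inequality \emph{once} per Case-2 interval, to the pair $(\tilde{\mathcal{B}}_j,\mathcal{L}_{\text{PNE}(j)})$, where $\text{PNE}(j)$ is the previous non-empty interval; this pair is contained in $\bigcup_{k\in\mathcal{S}_j}B_k^\textsc{opt}$, the OPT batches \emph{processed} in interval $j$. One obtains
\[
f(B_j^\textsc{WTA})\;\le\;\frac{1}{\Gamma}\sum_{k\in\mathcal{S}_j}f(B_k^\textsc{opt})\;+\;f(\mathcal{L}_j)-f(\mathcal{L}_{\text{PNE}(j)}).
\]
Since each OPT batch lies in exactly one $\mathcal{S}_j$ the first sum is single-counted, and the $f(\mathcal{L}_\cdot)$ terms telescope to zero across $j$ because $\mathcal{L}_0=\mathcal{L}_{m^\textsc{WTA}}=\emptyset$. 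That telescoping potential is the missing idea in your proposal.
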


The parameter $\Gamma$ captures the curvature or ``submodularity'' of the function $f$.
In fact, Assumption~\ref{assm:f-assumptions} implies that $\Gamma\in[\tfrac{1}{2},1]$.
Depending on what we know about the value of $\Gamma$ for our particular cost function $f$, 
we can choose $\alpha$ in Alg. \ref{alg:wte} appropriately to obtain a smaller competitive ratio (see Corollary~\ref{cor:wte} for more details).
It is straightforward to see that \textsc{WTA} works in an online fashion since computing the set \(u_t\) only requires us to know the
samples that have already arrived but have not yet been processed.

\revision{Next, we state the following corollary that gives us possible values of $\alpha$ that could be used in practice.
\begin{corollary}
The following are true for all functions $f(\cdot)$ that satisfy Assumption \ref{assm:f-assumptions}.
\begin{enumerate}
    \item[(i)] If $\alpha=\frac{1}{2}$, \textsc{WTA} admits a competitive ratio of $3$.
    \item[(ii)] If $\alpha=1$, \textsc{WTA} admits a competitive ratio of $\frac{2}{\Gamma}\leq 4$.
    \item[(iii)] If we know $\Gamma$, we can set $\alpha={\Gamma}$, giving us a competitive ratio of $\left(1+\frac{1}{\Gamma}\right)\in [2,3]$.
\end{enumerate}
\label{cor:wte}
\end{corollary}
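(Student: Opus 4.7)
The plan is to treat Corollary \ref{cor:wte} as a direct algebraic consequence of Theorem \ref{thm:wte}, with a short preliminary step establishing that $\Gamma \in [\tfrac{1}{2}, 1]$ for any $f$ satisfying Assumption \ref{assm:f-assumptions}. The upper bound $\Gamma \le 1$ is immediate from subadditivity (Assumption \ref{assm:f-assumptions}(iii)), since the defining ratio $f(\mathcal{X} \cup \mathcal{Y})/(f(\mathcal{X}) + f(\mathcal{Y}))$ is at most $1$ on every pair. For the lower bound $\Gamma \ge \tfrac{1}{2}$, I would invoke monotonicity (Assumption \ref{assm:f-assumptions}(ii)) to write $f(\mathcal{X} \cup \mathcal{Y}) \ge \max\{f(\mathcal{X}), f(\mathcal{Y})\} \ge \tfrac{1}{2}(f(\mathcal{X}) + f(\mathcal{Y}))$, taking the infimum over admissible $\mathcal{X}, \mathcal{Y}$.

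With this in hand, each of the three claims reduces to evaluating the bound $(1 + 1/\alpha)\max\{1, \alpha/\Gamma\}$ from Theorem \ref{thm:wte} at the specified $\alpha$. For part (i), setting $\alpha = \tfrac{1}{2}$ and using $\Gamma \ge \tfrac{1}{2}$ gives $\alpha/\Gamma \le 1$, so the $\max$ equals $1$ and the bound collapses to $1 + 1/\alpha = 3$. For part (ii), setting $\alpha = 1$ and using $\Gamma \le 1$ gives $\alpha/\Gamma \ge 1$, so the $\max$ equals $1/\Gamma$ and the bound becomes $2/\Gamma$; applying $\Gamma \ge \tfrac{1}{2}$ then yields $2/\Gamma \le 4$. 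For part (iii), setting $\alpha = \Gamma$ makes the ratio $\alpha/\Gamma$ exactly $1$, so the $\max$ is $1$ and the bound is $1 + 1/\Gamma$; the two-sided bound on $\Gamma$ immediately places this quantity in $[2, 3]$.

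There is essentially no substantive obstacle here: the statement is a packaging corollary whose role is to expose practically useful choices of $\alpha$, and the entire argument is case analysis on whether $\alpha/\Gamma$ exceeds $1$. The only non-mechanical step is the derivation of $\Gamma \in [\tfrac{1}{2}, 1]$, and even that is a one-line consequence of Assumption \ref{assm:f-assumptions}; I would present it explicitly so that the piecewise evaluation of the $\max$ in Theorem \ref{thm:wte}'s bound is rigorously justified in each of the three cases.
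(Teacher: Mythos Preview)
Your proposal is correct and follows essentially the same approach as the paper's proof: in each case you evaluate the bound $(1+1/\alpha)\max\{1,\alpha/\Gamma\}$ from Theorem~\ref{thm:wte} at the specified $\alpha$, using $\Gamma\in[\tfrac{1}{2},1]$ to resolve the $\max$. The only difference is that you explicitly derive $\Gamma\in[\tfrac{1}{2},1]$ from Assumption~\ref{assm:f-assumptions}, whereas the paper states this fact without proof just after Theorem~\ref{thm:wte}; your inclusion of this derivation is a welcome clarification.
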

\begin{proof}[Proof.]
When $\alpha=\frac{1}{2}$, using $\Gamma\in[\tfrac{1}{2},1]$ gives
\begin{align*}
    \max\left\{1,\frac{\alpha}{\Gamma}\right\}=1.
\end{align*}
Theorem~\ref{thm:wte} then implies that the competitive ratio is $\left(1+\frac{1}{\alpha}\right)=3$.

When $\alpha=1$, we have
\begin{align*}
    \max\left\{1,\frac{\alpha}{\Gamma}\right\}=\frac{1}{\Gamma}.
\end{align*}
So the competitive ratio given in Theorem~\ref{thm:wte} becomes
$\left(1+\frac{1}{\alpha}\right)\frac{1}{\Gamma}=\frac{2}{\Gamma}$.
Since $\Gamma\in[\tfrac{1}{2},1]$, this ratio cannot be more than $4$.

Finally, setting $\alpha={\Gamma}$ implies
\begin{align*}
    \max\left\{1, \frac{\alpha}{\Gamma}\right\}=1,
\end{align*}
which gives us the competitive ratio
$\left(1+\frac{1}{\alpha}\right)=\left(1+\frac{1}{\Gamma}\right)$.
As $\Gamma\in[\tfrac{1}{2},1]$, we have $\left(1+\frac{1}{\Gamma}\right)\in [2,3]$.
\end{proof}}

\revision{In practical scenarios, the exact value of $\Gamma$ might be unknown or difficult to compute, 
especially if $f(\cdot)$ is a complicated function.
Corollary~\ref{cor:wte} implies that the \textsc{WTA} algorithm can achieve a competitive ratio of $3$ by setting $\alpha=\frac{1}{2}$ independent of $\Gamma$.
Thus, we can achieve a competitive ratio of $3$ without having to know the exact value of $\Gamma$.}

Before proving Theorem~\ref{thm:wte}, we first state the following key lemma.

\begin{lemma}
The following is true for the \textsc{WTA} algorithm:%
\footnote{Note that if there are multiple optimal schedules,
all of them should have the same \(J^\textsc{opt}\), but they might have different values for \(W^\textsc{opt}\) and \(F^\textsc{opt}\).
For the purpose of proving
Theorem~\ref{thm:wte}, it is sufficient if Lemma~\ref{claim:testing-cost} holds for any one possible optimal schedule.}
\begin{align*}
\frac{1}{n}\int_0^\infty\left(u^\textsc{WTA}_\tau-u^\textsc{opt}_\tau\right)^+d\tau \le \frac{\alpha}{\Gamma}F^\textsc{opt},
\end{align*}
where \((x)^+=\max\{0,x\}\) and $\Gamma=\inf_{\mathcal{X},\mathcal{Y}\subseteq\mathcal{F}}\frac{f(\mathcal{X}\cup\mathcal{Y})}{f(\mathcal{X})+f(\mathcal{Y})}$.
\label{claim:testing-cost}
\end{lemma}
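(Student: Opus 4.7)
The plan is to decompose the integral on the left-hand side over the WTA processing intervals $I_k=(s^\textsc{WTA}_{k-1},s^\textsc{WTA}_k]$ (with the convention $s^\textsc{WTA}_0=0$), use the triggering rule of Alg.~\ref{alg:wte} to control the waiting accumulated on each interval, and finally pass from WTA batch costs to OPT batch costs using the $\Gamma$-subadditivity property.

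First I would observe that on the interval $I_k$ WTA has not processed any sample, so $u^\textsc{WTA}_\tau$ counts exactly the samples in $\mathcal{B}^\textsc{WTA}_k$ that have arrived by $\tau$, and by the stopping condition of Alg.~\ref{alg:wte} the accumulated WTA waiting satisfies $\int_{I_k} u^\textsc{WTA}_\tau\,d\tau = \alpha\,f(\mathcal{B}^\textsc{WTA}_k)$. Next I would examine when $(u^\textsc{WTA}_\tau-u^\textsc{opt}_\tau)^+$ is strictly positive on $I_k$. Using Assumption~\ref{assm:consecutive-sets} (that both algorithms process samples in arrival order), any ``excess'' sample at time $\tau\in I_k$ must be one in $\mathcal{B}^\textsc{WTA}_k$ whose OPT batch has already been processed by time $\tau$; in particular, every such OPT batch $j$ must satisfy $s^\textsc{opt}_j\in(s^\textsc{WTA}_{k-1},\tau]$. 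This gives the pointwise bound $(u^\textsc{WTA}_\tau-u^\textsc{opt}_\tau)^+\le u^\textsc{WTA}_\tau$ and the interval-level inequality $\int_{I_k}(u^\textsc{WTA}_\tau-u^\textsc{opt}_\tau)^+\,d\tau \le \alpha\,f(\mathcal{B}^\textsc{WTA}_k)$.

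The remaining task (and the core of the argument) is to bound $\sum_k f(\mathcal{B}^\textsc{WTA}_k)$ by $\tfrac{1}{\Gamma}\sum_j f(\mathcal{B}^\textsc{opt}_j)$. For each $k$, I would use the observation above to restrict attention to the collection $\mathcal{O}_k$ of OPT batches whose processing times fall inside $I_k$ and which share samples with $\mathcal{B}^\textsc{WTA}_k$. By Assumption~\ref{assm:consecutive-sets} both families of batches are consecutive in sample index, which tightly constrains how an OPT batch can split across WTA batches. Applying Assumption~\ref{assm:f-assumptions}(iii) together with the definitional inequality $f(A)+f(B)\le \tfrac{1}{\Gamma}\,f(A\cup B)$ (obtained by inverting the definition of $\Gamma$), I would bound $f(\mathcal{B}^\textsc{WTA}_k)$ in terms of the $f(\mathcal{B}^\textsc{opt}_j)$ for $j\in\mathcal{O}_k$. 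Summing over $k$ and using that each OPT batch's processing time lies in at most one $I_k$ then yields $\alpha\sum_k f(\mathcal{B}^\textsc{WTA}_k)\le\tfrac{\alpha}{\Gamma}\sum_j f(\mathcal{B}^\textsc{opt}_j)=\tfrac{\alpha n}{\Gamma}F^\textsc{opt}$, which gives the claim after dividing by $n$.

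The main obstacle is the last paragraph: the factor $1/\Gamma$ has to arise from exactly one application of $f(A)+f(B)\le \tfrac{1}{\Gamma}f(A\cup B)$ in aggregate, rather than compounding (which would give $1/\Gamma^{m-1}$ and fail). Handling this requires carefully exploiting the consecutive structure of WTA and OPT batches so that the ``boundary'' OPT batches (those straddling two WTA intervals) are absorbed without multiplicative blow-up, and charging each OPT batch only to the WTA interval containing its processing time.
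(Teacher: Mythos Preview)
Your decomposition over the \textsc{WTA} intervals and use of the triggering rule match the paper exactly, and your observation that $(u^\textsc{WTA}_\tau-u^\textsc{opt}_\tau)^+=0$ on any interval $I_k$ in which \textsc{opt} processes nothing is correct and important. The gap is in the ``remaining task.'' As stated---$\sum_k f(\mathcal{B}^\textsc{WTA}_k)\le \tfrac{1}{\Gamma}\sum_j f(\mathcal{B}^\textsc{opt}_j)$ with the sum over \emph{all} $k$---this is stronger than the lemma requires and is not what the paper proves; your own emptiness observation must be used to drop the intervals with $\mathcal{O}_k=\emptyset$ before invoking the bound $\int_{I_k}(\cdot)^+\le \alpha f(\mathcal{B}^\textsc{WTA}_k)$. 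Otherwise you are effectively trying to show $F^\textsc{WTA}\le\tfrac{1}{\Gamma}F^\textsc{opt}$, which the argument does not deliver.

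Even after restricting to $k$ with $\mathcal{O}_k\neq\emptyset$, the per-interval inequality ``$f(\mathcal{B}^\textsc{WTA}_k)\le \tfrac{1}{\Gamma}\sum_{j\in\mathcal{O}_k}f(\mathcal{B}^\textsc{opt}_j)$'' cannot hold, because $\mathcal{B}^\textsc{WTA}_k$ typically contains a \emph{leftover} set $\mathcal{L}_k$ of samples that \textsc{opt} processes only in a later interval; no batch in $\mathcal{O}_k$ covers them. The paper's resolution is exactly the amortization you flag as the obstacle: write $\mathcal{B}^\textsc{WTA}_k=\tilde{\mathcal{B}}_k\cup\mathcal{L}_k$, observe that the \textsc{opt} batches processed in $I_k$ together cover $\tilde{\mathcal{B}}_k\cup\mathcal{L}_{\text{PNE}(k)}$ (the previous non-empty interval's leftover), and apply the $\Gamma$-inequality \emph{once} to that union to obtain
\[
f(\mathcal{B}^\textsc{WTA}_k)\ \le\ \frac{1}{\Gamma}\sum_{j\in\mathcal{S}_k}f(\mathcal{B}^\textsc{opt}_j)\ +\ f(\mathcal{L}_k)-f(\mathcal{L}_{\text{PNE}(k)}).
\]
The extra terms form a telescoping potential that vanishes when summed (since $\mathcal{L}_0=\mathcal{L}_{m^\textsc{WTA}}=\emptyset$). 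Your ``charge each \textsc{opt} batch to the \textsc{WTA} interval containing its processing time'' is the right picture for the first term, but the missing device is carrying the uncovered leftover $f(\mathcal{L}_k)$ forward as a potential; without it the $1/\Gamma$ factors compound exactly as you feared.
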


We defer the proof of Lemma~\ref{claim:testing-cost} to Sec.~\ref{sec:testing-cost}.
First, we show how this lemma can be used to prove Theorem~\ref{thm:wte}.

\begin{proof}[Proof of Theorem~\ref{thm:wte}.]
\isit{Observe that for any algorithm \textsc{alg}, we can write
\begin{align*}
u^\textsc{alg}_\tau \le u^\textsc{opt}_\tau + \left(u^\textsc{alg}_\tau-u^\textsc{opt}_\tau\right)^+
\ \text{for all} \ \tau \ge 0.
\end{align*}
Writing this for \(\textsc{WTA}\) and integrating over time gives us
\begin{align*}
\int_0^\infty u^\textsc{WTA}_\tau d\tau
\le \int_0^\infty u^\textsc{opt}_\tau d\tau
+ \int_0^\infty \left(u^\textsc{WTA}_\tau-u^\textsc{opt}_\tau\right)^+ d\tau.
\end{align*}}
Using \eqref{eq:wait-time-integral} and Lemma~\ref{claim:testing-cost}, we get
\begin{align}
W^\textsc{WTA} \le W^\textsc{opt} + \frac{\alpha}{\Gamma}F^\textsc{opt}
&\le \max\left\{1, \frac{\alpha}{\Gamma}\right\}\left( W^\textsc{opt} + F^\textsc{opt} \right) \nonumber\\
&= \max\left\{1,\frac{\alpha}{\Gamma}\right\}J^\textsc{opt}.
\label{eq:w-less-than-f}
\end{align}

Since \textsc{WTA} processes samples at the time instant when their cumulative \thirdrevision{waiting} 
time
is equal to $\alpha$ times their processing cost, we have
\begin{align*}
W^\textsc{WTA} = \alpha F^\textsc{WTA},
\end{align*}
which gives \(J^\textsc{WTA}=\left(1+\frac{1}{\alpha}\right)W^\textsc{WTA}\).
Using \eqref{eq:w-less-than-f} then gives us
\begin{align*}
J^\textsc{WTA} \le \left(1+\frac{1}{\alpha}\right)\max\left\{1,\frac{\alpha}{\Gamma}\right\}J^\textsc{opt},    
\end{align*}
which concludes the proof.
\end{proof}

We are now ready to prove Lemma \ref{claim:testing-cost}.

\subsection{Proof of Lemma~\ref{claim:testing-cost}}
\label{sec:testing-cost}
Let \textsc{opt} be the optimal schedule.
\isit{For proving Lemma~\ref{claim:testing-cost}, we divide the integral
\(\int_0^\infty (u^\textsc{WTA}_\tau-u^\textsc{opt}_\tau)^+d\tau\)
into segments:
\begin{align}
\int_0^\infty (u^\textsc{WTA}_\tau-u^\textsc{opt}_\tau)^+d\tau
= \sum_{j=1}^{m^\textsc{WTA}} \int_{s^\textsc{WTA}_{j-1}}^{s^\textsc{WTA}_j} (u^\textsc{WTA}_\tau-u^\textsc{opt}_\tau)^+d\tau,
\label{eq:proof-claim-1-eq1}
\end{align}
where we have defined \(s^\textsc{WTA}_0=0\).
Note that for \(\tau>s^\textsc{WTA}_{m^\textsc{WTA}}\),
the integrand is \(0\) since all the samples have been processed by \textsc{WTA},
and so \(u^\textsc{WTA}_\tau=0\).

The set of batches in \textsc{opt} can be partitioned into sets \(\mathcal{S}_j\) based on the
\((s^\textsc{WTA}_{j-1},s^\textsc{WTA}_j]\) segment in which \textsc{opt}
processes them:%
\footnote{
For $j=1$, we have $s_{j-1}^\textsc{WTA}=0$ and the appropriate set here
is $[s_0^\textsc{WTA},s_1^\textsc{WTA}]$.}
\begin{align*}
\mathcal{S}_j = \left\{k:s^\textsc{opt}_k\in(s^\textsc{WTA}_{j-1},s^\textsc{WTA}_j]\right\}\ \text{for} \ j\in\{1,2,\ldots,m^\textsc{WTA}\}.
\end{align*}

Observe that at \(\tau=s^\textsc{WTA}_{j-1}\) for all \(j\), \(u^\textsc{WTA}_\tau=0\) since \textsc{WTA}
processes all the samples available at \(\tau\).
So \((u^\textsc{WTA}_\tau-u^\textsc{opt}_\tau)^+=0\) at \(\tau=s^\textsc{WTA}_{j-1}\).
Further, if \(\mathcal{S}_{j}=\emptyset\), then \textsc{opt} does not process any samples
before \textsc{WTA} processes all the available samples again, and so
\((u^\textsc{WTA}_\tau-u^\textsc{opt}_\tau)^+=0\) for all \(\tau\in(s^\textsc{WTA}_{j-1},s^\textsc{WTA}_{j}]\).
This gives
\begin{align}
\int_{s^\textsc{WTA}_{j-1}}^{s^\textsc{WTA}_j}\left(u^\textsc{WTA}_\tau-u^\textsc{opt}_\tau\right)^+d\tau = 0\ \ \text{if}\ \ \mathcal{S}_j=\emptyset.
\label{eq:partitioned-optimal-cost-empty}
\end{align}

If \(\mathcal{S}_j\neq\emptyset\), we have}%
\begin{align}\label{eq:testing-cost-non-empty-partitions}
\int_{s^\textsc{WTA}_{j-1}}^{s^\textsc{WTA}_j}\left(u^\textsc{WTA}_\tau-u^\textsc{opt}_\tau\right)^+d\tau
\le \int_{s^\textsc{WTA}_{j-1}}^{s^\textsc{WTA}_j} u^\textsc{WTA}_\tau d\tau  = \alpha f\left( \mathcal{B}_j^\textsc{WTA}\right),
\end{align}
where the equality follows directly from the definition of the \textsc{WTA} algorithm.

We now define some useful variables, which are also listed in Table~\ref{tab:proof-variables} for ease of presentation. We refer to Fig.~\ref{fig:proof-variables} for a pictorial illustration of those variables. Let $\tilde{\mathcal{B}}_j$ denote the samples that arrive in $(s_{j-1}^\textsc{WTA},s_j^\textsc{WTA}]$
and are processed by \textsc{opt} in the same interval.
Let $\mathcal{L}_j$ denote the samples that arrive in $(s_{j-1}^\textsc{WTA},s_j^\textsc{WTA}]$
but are not processed by $\textsc{opt}$ in the same interval.
Define $\mathcal{L}_0:=\emptyset$.
Note that $\tilde{\mathcal{B}}_j\cup\mathcal{L}_j=\mathcal{B}_j^\textsc{WTA}$,
and $\mathcal{L}_{m^\textsc{WTA}}=\emptyset$ (because the final batch of \textsc{opt}
would be processed at $t=a_n$, before the final batch of \textsc{WTA}).
Finally, let $\text{PNE}(j)$ denote the previous batch before $j$ where $\mathcal{S}_j$
is non-empty (the ``previous non-empty'' batch), i.e., $\text{PNE}(j):=\max\{k:k<j\ \text{and}\ \mathcal{S}_k\neq\emptyset\}$.
Define $\text{PNE}(1):=0$.


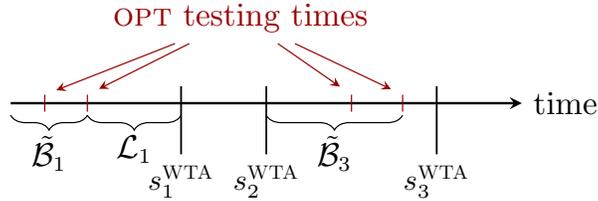
\begin{figure}[h!]
\centering
\resizebox{0.5\columnwidth}{!}{
    \begin{tikzpicture}[>=stealth]
        \draw[thick,->] (-1,0) -- (5,0);
        \node at (5,0) [anchor=west] {time};
        \draw[semithick] (1,0)+(0,0.2) -- +(0,-0.6);
        \node at (1,-0.6) [anchor=north] {\small $s_1^\textsc{WTA}$};
        \draw[semithick] (2,0)+(0,0.2) -- +(0,-0.6);
        \node at (2,-0.6) [anchor=north] {\small $s_2^\textsc{WTA}$};
        \draw[semithick] (4,0)+(0,0.2) -- +(0,-0.6);
        \node at (4,-0.6) [anchor=north] {\small $s_3^\textsc{WTA}$};

        \begin{scope}[color=red!60!black]
        \draw[thin] (-0.1,0)+(0,0.1) node (p1) {} -- +(0,-0.1);
        \draw[thin] (-0.6,0)+(0,0.1) node (p2) {} -- +(0,-0.1);
        \draw[thin] (3.0,0)+(0,0.1) node (p3) {} -- +(0,-0.1);
        \draw[thin] (3.6,0)+(0,0.1) node (p4) {} -- +(0,-0.1);

        \node at (1.7,0.7) [anchor=south] (p) {\textsc{opt} testing times};
        \draw[->] (p) -- (p1);
        \draw[->] (p) -- (p2);
        \draw[->] (p) -- (p3);
        \draw[->] (p) -- (p4);
        \end{scope}

        \draw[decorate, decoration={brace,amplitude=5pt,raise=4pt,mirror}] (-1,0) -- (-0.1,0) node[midway, anchor=north,yshift=-7pt] {$\tilde{\mathcal{B}}_1$};
        \draw[decorate, decoration={brace,amplitude=5pt,raise=4pt,mirror}] (-0.1,0) -- (1,0) node[midway, anchor=north,yshift=-7pt] {$\mathcal{L}_1$};
        \draw[decorate, decoration={brace,amplitude=5pt,raise=4pt,mirror}] (2,0) -- (3.6,0) node[midway, anchor=north,yshift=-7pt] {$\tilde{\mathcal{B}}_3$};
    \end{tikzpicture}
}
\caption{An illustration of the variables used in the proof of Lemma~\ref{claim:testing-cost}.
The calligraphic variables $\mathcal{B}$ and $\mathcal{L}$ denote the set of
samples that arrive during the indicated period.}
\label{fig:proof-variables}
\end{figure}

\begin{table}[h!]
\centering
    \begin{tabular}{c|p{11cm}}
    \hline
    \textbf{Variable} & \textbf{Definition} \\
    \hline
    $\Gamma$ & $\inf_{\mathcal{X},\mathcal{Y}\subseteq\mathcal{F}}\frac{f(\mathcal{X}\cup\mathcal{Y})}{f(\mathcal{X})+f(\mathcal{Y})}$ \\
    \hline
    \textsc{opt} & An optimal schedule (computed with hindsight). \\
    \hline
    \rule{0pt}{2.5ex}$\tilde{\mathcal{B}}_j$ & Samples processed in \textsc{opt} before being processed as part of $\mathcal{B}_j^\textsc{WTA}$. Note that $\tilde{\mathcal{B}}_j\subseteq\mathcal{B}_j^\textsc{WTA}$. \\
    \hline
    \rule{0pt}{2.5ex}$\mathcal{L}_j$ & Samples processed in $\mathcal{B}_j^\textsc{WTA}$ which arrive after \textsc{opt} has processed $\tilde{\mathcal{B}}_j$.
    Note that $\tilde{\mathcal{B}}_j\cup\mathcal{L}_j=\mathcal{B}_j^\textsc{WTA}$. Also $\mathcal{L}_{m^\textsc{WTA}}=\emptyset$ and let $\mathcal{L}_0:=\emptyset$.\\
    \hline
    $\mathcal{S}_j$ & Set of batches that are processed in \textsc{opt} at times $t\in(s_{j-1}^\textsc{WTA},s_j^\textsc{WTA}]$. \\
    \hline
    $\text{PNE}(j)$ & The ``previous non-empty'' batch before $j$ where $\mathcal{S}_j\neq\emptyset$.
    Specifically, $\text{PNE}(j):=\max\{k:k<j\text{ and }\mathcal{S}_k\neq\emptyset\}$. Let $\text{PNE}(1):=0$.\\
    \hline
    \end{tabular}
    \begin{tabular}{c}\rule{0pt}{1ex}\end{tabular}
\caption{Variables used in the proof of Lemma~\ref{claim:testing-cost}.}
\label{tab:proof-variables}
\end{table}

Since $\mathcal{B}_j^\textsc{WTA}=\tilde{\mathcal{B}}_j\cup\mathcal{L}_j$,
Assumption~\ref{assm:f-assumptions} gives
\begin{align}
    f\left( \mathcal{B}_j^\textsc{WTA}\right) \le f(\tilde{\mathcal{B}}_j) + f\left(\mathcal{L}_j\right).
\label{eq:amortize-part1}
\end{align}
Further, by Assumption~\ref{assm:consecutive-sets}, \textsc{opt} needs to process all samples in $\tilde{\mathcal{B}}_j$
and $\mathcal{L}_{\text{PNE}(j)}$ during $(s_{j-1}^\textsc{WTA},s_j^\textsc{WTA}]$, either together or separately,
possibly with other samples if $\text{PNE}(j)<j-1$.
Together with Assumption~\ref{assm:f-assumptions}, this gives
\begin{align}
    f(\tilde{\mathcal{B}}_j\cup\mathcal{L}_{\text{PNE}(j)}) \le \sum_{k\in\mathcal{S}_j}f(\mathcal{B}_k^\textsc{opt}).
\label{eq:amortize-part2}
\end{align}

Adding \eqref{eq:amortize-part1} and \eqref{eq:amortize-part2}, we get
\begin{align*}
    f(\mathcal{B}_j^\textsc{WTA}) \le \sum_{k\in\mathcal{S}_j}f(\mathcal{B}_k^\textsc{opt})+ f(\tilde{\mathcal{B}}_j) + f(\mathcal{L}_j) - f(\tilde{\mathcal{B}}_j\cup\mathcal{L}_{\text{PNE}(j)}).
\end{align*}
Adding and subtracting $f(\mathcal{L}_{\text{PNE}(j)})$ gives
{\begin{align*}
    f(\mathcal{B}_j^\textsc{WTA}) &\!\le\! \sum_{k\in\mathcal{S}_j}\!\!\!f(\!\mathcal{B}_k^\textsc{opt}\!)
        \!+\! f(\!\tilde{\mathcal{B}}_j\!) \!+\! f(\!\mathcal{L}_{\text{PNE}(j)}\!) \!-\! f(\!\tilde{\mathcal{B}}_j\!\cup\!\mathcal{L}_{\text{PNE}(j)}\!)
        \!+\!f(\mathcal{L}_j)\!-\!f(\mathcal{L}_{\text{PNE}(j)}) \\
        &\!\le\! \sum_{k\in\mathcal{S}_j}\!\!\!f(\!\mathcal{B}_k^\textsc{opt}\!)
        \!+\! f(\!\tilde{\mathcal{B}}_j\!\cup\!\mathcal{L}_{\text{PNE}(j)}\!)
            \!\left(\!\tfrac{f(\tilde{\mathcal{B}}_j) \!+\! f(\mathcal{L}_{\text{PNE}(j)})}{f(\tilde{\mathcal{B}}_j\cup\mathcal{L}_{\text{PNE}(j)})}-1\!\right)\!
        \!+\!f(\!\mathcal{L}_j\!)\!-\!f(\!\mathcal{L}_{\text{PNE}(j)}\!).
\end{align*}}
Using the definition of $\Gamma$ and substituting \eqref{eq:amortize-part2}, we obtain
{\begin{align*}
    f(\mathcal{B}_j^\textsc{WTA}) &\le \sum_{k\in\mathcal{S}_j}f(\mathcal{B}_k^\textsc{opt})
        +\left(\sum_{k\in\mathcal{S}_j}f(\mathcal{B}_k^\textsc{opt})\right)\left(\frac{1}{\Gamma}-1\right)
        + f(\mathcal{L}_j)-f(\mathcal{L}_{\text{PNE}(j)}),
\end{align*}}
which simplifies to
\begin{align}
    f(\mathcal{B}_j^\textsc{WTA}) \le \frac{1}{\Gamma}\sum_{k\in\mathcal{S}_j}f(\mathcal{B}_k^\textsc{opt})
        +f(\mathcal{L}_j)-f(\mathcal{L}_{\text{PNE}(j)}).
\label{eq:amortize-part3}
\end{align}

Substituting \eqref{eq:amortize-part3} in \eqref{eq:testing-cost-non-empty-partitions}, we get
{\begin{align*}
    \int_{s^\textsc{WTA}_{j-1}}^{s^\textsc{WTA}_j}\left(u^\textsc{WTA}_\tau-u^\textsc{opt}_\tau\right)^+d\tau
    \le \frac{\alpha}{\Gamma}\sum_{k\in\mathcal{S}_j}f(\mathcal{B}_k^\textsc{opt})
    +\alpha\left(f(\mathcal{L}_j)-f(\mathcal{L}_{\text{PNE}(j)})\right).
\end{align*}}
Summing this over all non-empty $\mathcal{S}_j$ results in a telescopic cancellation of
the $f(\mathcal{L}_j)-f(\mathcal{L}_{\text{PNE}(j)})$ terms.
Using $\mathcal{L}_0=\mathcal{L}_{m^\textsc{WTA}}=\emptyset$, together with
\eqref{eq:partitioned-optimal-cost-empty} and \eqref{eq:proof-claim-1-eq1}, this gives us
\begin{align*}
    \int_0^\infty (u^\textsc{WTA}_\tau-u^\textsc{opt}_\tau)^+d\tau
    \le \frac{\alpha}{\Gamma}\left(nF^\textsc{opt}\right),
\end{align*}
which completes the proof.\hfill$\square$

\begin{remark}
\isit{One can view relation \eqref{eq:amortize-part3} as a variant of the amortized local-competitiveness argument \citep{Edmonds2000, Pruhs2007}.
If the right side only had the first term (possibly with additional factors), it would be locally competitive.}
However, using the $f(\mathcal{L}_j)-f(\mathcal{L}_{\text{PNE}(j)})$ term,
we amortize this cost over all $j$.
This can also be interpreted as a (discrete) potential function \citep{BansalPS2010,VazeN2022} that sums to $0$ over all time.
\end{remark}

\section{A Lower Bound for the Competitive Ratio}
\label{sec:lower-bound}

\revision{We now derive a lower bound on the competitive ratio and state it formally as Theorem~\ref{thm:lower-bound}.}
\begin{theorem}
\revision{Let \textsc{alg} be an online algorithm that achieves a competitive ratio of \(\rho\),  and \(f(\cdot)\) be the
processing-cost function for the batching problem.} 
Then,
\begin{align*}
\rho \ge \frac{1}{{\Gamma}},
\end{align*}
where \(\Gamma=\inf_{\mathcal{X},\mathcal{Y}\subseteq\mathcal{F}}\frac{f(\mathcal{X}\cup\mathcal{Y})}{f(\mathcal{X})+f(\mathcal{Y})}\).
\label{thm:lower-bound}
\end{theorem}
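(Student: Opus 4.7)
The plan is to construct, for any $\eta > 0$ and any deterministic online algorithm \textsc{alg}, a problem instance on which the ratio $J^\textsc{alg}/J^\textsc{opt}$ is at least $1/(\Gamma + \eta)$; letting $\eta \to 0$ will then yield $\rho \ge 1/\Gamma$.

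First, by the definition of $\Gamma$ as an infimum, I would select multisets $\mathcal{X}, \mathcal{Y} \subseteq \mathcal{F}$ with $f(\mathcal{X}\cup\mathcal{Y}) \le (\Gamma + \eta)\bigl(f(\mathcal{X}) + f(\mathcal{Y})\bigr)$. The adversary then adaptively constructs the instance: release the samples in $\mathcal{X}$ at time $0$ and observe \textsc{alg} to determine the first time $t^*$ at which it schedules any processing on the continuation in which no further samples arrive. The adversary completes the instance by releasing the samples in $\mathcal{Y}$ at time $t^* + \epsilon$ for a vanishingly small $\epsilon > 0$. Since the decision to process at time $t^*$ is irrevocable and made before \textsc{alg} can observe $\mathcal{Y}$, the algorithm is forced to process $\mathcal{X}$ and $\mathcal{Y}$ in separate batches, incurring processing cost $f(\mathcal{X}) + f(\mathcal{Y})$ on top of the waiting cost $|\mathcal{X}|t^*$. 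By contrast, the offline optimum can batch both together at time $t^* + \epsilon$, paying at most $|\mathcal{X}|(t^* + \epsilon) + f(\mathcal{X}\cup\mathcal{Y})$.

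Next I would combine two $\rho$-competitiveness constraints to pin down $\rho$. On the scout instance consisting only of $\mathcal{X}$, competitiveness gives $|\mathcal{X}|t^* \le (\rho - 1)f(\mathcal{X})$, whereas on the full instance it gives $|\mathcal{X}|t^* + f(\mathcal{X}) + f(\mathcal{Y}) \le \rho\bigl(|\mathcal{X}|t^* + f(\mathcal{X}\cup\mathcal{Y})\bigr)$. Substituting the scout bound and the near-infimum guarantee on $f(\mathcal{X}\cup\mathcal{Y})$ and rearranging yields an inequality of the form $\bigl(1 - \rho(\Gamma + \eta)\bigr)\bigl(f(\mathcal{X}) + f(\mathcal{Y})\bigr) \le (\rho - 1)^2 f(\mathcal{X})$. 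Choosing $\mathcal{X}, \mathcal{Y}$ so that $f(\mathcal{Y})$ vastly dominates $f(\mathcal{X})$ makes the right-hand side negligible compared with the left-hand factor $f(\mathcal{X}) + f(\mathcal{Y})$, so the inequality can be satisfied only if $1 - \rho(\Gamma + \eta) \le 0$, forcing $\rho \ge 1/(\Gamma + \eta)$.

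The hard part will be guaranteeing that multisets $\mathcal{X}, \mathcal{Y}$ can be chosen with $f(\mathcal{Y})/f(\mathcal{X})$ arbitrarily large while simultaneously driving $f(\mathcal{X}\cup\mathcal{Y})/\bigl(f(\mathcal{X}) + f(\mathcal{Y})\bigr)$ to $\Gamma$. For cost functions in which the infimum is attained only by pairs with balanced $f$-values --- for instance, a constant cost function, corresponding to the TCP acknowledgement special case --- this scaling is not directly available, and the single-round construction above would need to be replaced by a multi-round adversarial strategy in which each round forces \textsc{alg} into a separate suboptimal batching decision and the per-round losses accumulate to drive the ratio up to $1/\Gamma$.
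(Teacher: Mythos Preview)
Your single-round construction is sound as far as it goes, and you have correctly put your finger on its fatal limitation: the inequality $(1-\rho(\Gamma+\eta))(f(\mathcal{X})+f(\mathcal{Y}))\le(\rho-1)^2 f(\mathcal{X})$ only forces $\rho\ge 1/(\Gamma+\eta)$ when $f(\mathcal{Y})/f(\mathcal{X})$ can be sent to infinity \emph{while} holding $f(\mathcal{X}\cup\mathcal{Y})/(f(\mathcal{X})+f(\mathcal{Y}))$ near $\Gamma$. For most natural cost functions this is impossible, not just for the constant case you mention. For $f(\mathcal{X})=\sqrt{|\mathcal{X}|}$, say, the infimum $\Gamma=1/\sqrt{2}$ is realized only by equal-size pairs, and any unbalanced choice pushes the ratio toward $1$; your single round then yields a bound strictly weaker than $1/\Gamma$ (for the constant-cost case it gives only $\rho\ge(1+\sqrt{5})/2$ instead of $2$). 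So the multi-round patch you allude to is not optional---it is the whole proof.

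The paper goes directly to that multi-round construction, alternating two fixed multisets $\mathcal{X}_1,\mathcal{X}_2$ over $2s$ rounds so that \textsc{alg} is forced to pay $s(f(\mathcal{X}_1)+f(\mathcal{X}_2))$ in processing cost plus whatever waiting it incurs. The technical device you are missing for bounding $J^\textsc{opt}$ is not to exhibit a single competing schedule but to average two of them---one that processes at every odd-indexed arrival epoch and one at every even-indexed epoch (a trick the paper attributes to Dooly, Goldman, and Scott). The averaged processing cost is $\approx s\,f(\mathcal{X}_1\cup\mathcal{X}_2)$, and the averaged waiting cost is exactly half of \textsc{alg}'s own waiting cost; since $J^\textsc{opt}$ is at most either schedule's cost, it is at most their average. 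After using $\tfrac{a+c}{b+c}\ge\tfrac{a}{b}$ (for $0<a\le b$, $c\ge 0$) to strip the common waiting term, sending $s\to\infty$ and $\epsilon\to 0$, and finally taking the supremum over $\mathcal{X}_1,\mathcal{X}_2$, the bound $\rho\ge 1/\Gamma$ follows with no assumption at all on the relative magnitudes of $f(\mathcal{X}_1)$ and $f(\mathcal{X}_2)$.
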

\begin{proof}
For proving a lower bound given any algorithm, we wish to construct a maximally adversarial sequence of arrivals, which
makes the algorithm's cost as high as possible compared to the optimal schedule.
Let a set of arrivals with feature vectors $\mathcal{X}_1$ occur at time $t=0(=:t_0)$ (we later choose  $\mathcal{X}_1$ to get the maximum lower bound),
and assume \textsc{alg} processes them at time $t=t_1$.
We construct the arrival sequence so that further arrivals with a feature-vector set $\mathcal{X}_2$ occur at $t=t_1+\epsilon$ for some small $\epsilon>0$.
Let \textsc{alg} process these at some time $t_2\ge t_1+\epsilon$.
At time $t=t_2+\epsilon$, we again add a set of arrivals with feature vectors $\mathcal{X}_1$, with the set $\mathcal{X}_2$ subsequently immediately after the previous set $\mathcal{X}_1$ is processed.
We repeat this adversarial construction $s$ times so that we have $2s$ batches of arrivals, where each batch either has feature vectors $\mathcal{X}_1$ or $\mathcal{X}_2$.
The $j$th batch
arrives at time $t=t_{j-1}+\epsilon$ and is processed at $t=t_j$.%
\footnote{We can assume the first batch arrives at $t=\epsilon$ rather than $t=0$ with no meaningful difference to the analysis.}
This gives us the unnormalized cost of \textsc{alg}, \revision{$s\times(|\mathcal{X}_1|+|\mathcal{X}_2|)\times J^\textsc{alg}$},
for this arrival sequence as
{\begin{align}
    s(|\mathcal{X}_1|\!+\!|\mathcal{X}_2|)J^\textsc{alg} &= s(f(\mathcal{X}_1)\!+\!f(\mathcal{X}_2))
    + \sum_{k=1}^s\! { |\mathcal{X}_1|(t_{2k-1}\!-\!t_{2k-2}\!-\!\epsilon)}
    + \sum_{k=1}^s\! { |\mathcal{X}_2|(t_{2k}\!-\!t_{2k-1}\!-\!\epsilon)}\nonumber\\
    = \!s(f(\mathcal{X}_1)\!+\!f(\mathcal{X}_2))\!
    &+\! \sum_{k=1}^s\!\!\Big(\! { |\mathcal{X}_1|(t_{2k-1}-t_{2k-2})+|\mathcal{X}_2|(t_{2k}-t_{2k-1})}\!\Big)
    \!-\! (|\mathcal{X}_1|\!+\!|\mathcal{X}_2|)s\epsilon.
    \label{eq:jalg-lower-bound-eq}
\end{align}}
We now need to compare $J^\textsc{alg}$ with the optimal cost $J^\textsc{opt}$ for these arrivals to
determine the competitive ratio.
While we can use the techniques of Sec.~\ref{sec:offline} to compute the optimal schedule for a particular set of $\{t_1,t_2,\ldots,t_s\}$,
this involves solving a dynamic programming problem and it does not seem analytically tractable to express the solution as a function of
$\{t_1,t_2,\ldots,t_s\}$.
Instead, adapting a technique by \cite{DoolyGS2001}, we use the fact that the optimal cost can be no more than the average of
the cost of processing at odd-numbered batch arrivals and the cost of processing at even-numbered batch arrivals.

The odd-numbered batches arrive at $t_0, t_2+\epsilon, t_4+\epsilon,\ldots,t_{2s-2}+\epsilon$.
If we process at these time instants, we also need to process the final batch, which arrives at $t=t_{2s-1}+\epsilon$.
To simplify the math, let us assume the ``odd'' schedule processes the final batch at time $t_{2s}+\epsilon$.
With this setup, the first batch has feature vectors $\mathcal{X}_1$ and $2s$th arrivals have the feature vectors $\mathcal{X}_2$,
and the remaining $(2s-2)$ arrivals get processed
in $(s-1)$ batches of feature vectors  $\mathcal{X}_1\cup\mathcal{X}_2$.
Further, the odd-numbered arrivals do not experience any waiting time, while the even-numbered arrivals experience 
a waiting time of $t_j-t_{j-1}$ for batch $j$.
\revision{This gives us the cost of this schedule, $J^\textsc{odd}$ as}
\begin{align}
    \!\!\!\!\!\!\!s(|\mathcal{X}_1|\!+\!|\mathcal{X}_2|)J^\textsc{odd}\! =\! f(\mathcal{X}_1)\!+\!f(\mathcal{X}_2)\! + \!(s\!-\!1)f(\mathcal{X}_1\!\cup\!\mathcal{X}_2)\! +\! \sum_{k=1}^s\!|\mathcal{X}_2|(t_{2k}\!-\!t_{2k-1}).
    \label{eq:jodd-lower-bound-eq}
\end{align}

Similarly, processing at each even-numbered arrival implies processing $s$ batches each with feature vectors $\mathcal{X}_1\cup\mathcal{X}_2$.
The waiting time here would be for the odd-numbered arrivals, equal to $(t_j-t_{j-1})$ for batch $j$.
\revision{This gives}
\begin{align}
    s(|\mathcal{X}_1|+|\mathcal{X}_2|)J^\textsc{even} = sf(\mathcal{X}_1\cup\mathcal{X}_2) + \sum_{k=1}^s|\mathcal{X}_1|(t_{2k-1}-t_{2k-2}).
    \label{eq:jeven-lower-bound-eq}
\end{align}
Averaging \eqref{eq:jodd-lower-bound-eq} and \eqref{eq:jeven-lower-bound-eq}, we get
{\begin{align} 
    &s(|\mathcal{X}_1|+|\mathcal{X}_2|)J^\textsc{opt} \le \frac{s(|\mathcal{X}_1|+|\mathcal{X}_2|)(J^\textsc{odd}+J^\textsc{even})}{2} \nonumber\\
    &= \frac{f(\mathcal{X}_1)+f(\mathcal{X}_2)}{2} + \frac{2s-1}{2}f(\mathcal{X}_1\!\cup\!\mathcal{X}_2)
    + \frac{1}{2}\!\sum_{k=1}^s\!\!\Big({ |\mathcal{X}_1|(t_{2k-1}-t_{2k-2})+|\mathcal{X}_2|(t_{2k}-t_{2k-1})}\Big).
    \label{eq:jopt-average-lower-bound}
\end{align}}
Using \eqref{eq:jalg-lower-bound-eq} and \eqref{eq:jopt-average-lower-bound} with the fact that
$\epsilon$ can be arbitrarily small%
\footnote{\thirdrevision{Recall that we are constructing the arrival times here, so $\epsilon$ can be any positive value, however small.}} 
gives
{\small
\begin{align*}
    \frac{J^\textsc{alg}}{J^\textsc{opt}} \ge \frac{s(f(\mathcal{X}_1)+f(\mathcal{X}_2))+\sum_{k=1}^s\Big(|\mathcal{X}_1|(t_{2k-1}-t_{2k-2})+|\mathcal{X}_2|(t_{2k}-t_{2k-1})\Big)}
        {\frac{f(\mathcal{X}_1)+f(\mathcal{X}_2)}{2}+\frac{2s-1}{2}f(\mathcal{X}_1\cup\mathcal{X}_2)+\tfrac{1}{2}\sum_{k=1}^s\Big(|\mathcal{X}_1|(t_{2k-1}-t_{2k-2})+|\mathcal{X}_2|(t_{2k}-t_{2k-1})\Big)}.
\end{align*}}
Rearranging the terms, this gives
{\small
\begin{align*}
    \frac{J^\textsc{alg}}{J^\textsc{opt}} \ge 
    2\left(\frac{f(\mathcal{X}_1)+f(\mathcal{X}_2)+\frac{1}{s}\sum_{k=1}^s\Big(|\mathcal{X}_1|(t_{2k-1}-t_{2k-2})+|\mathcal{X}_2|(t_{2k}-t_{2k-1})\Big)}{2f(\mathcal{X}_1\cup\mathcal{X}_2)+\frac{f(\mathcal{X}_1)+f(\mathcal{X}_2)-f(\mathcal{X}_1\cup\mathcal{X}_2)}{s}+\frac{1}{s}\sum_{k=1}^s\Big(|\mathcal{X}_1|(t_{2k-1}-t_{2k-2})+|\mathcal{X}_2|(t_{2k}-t_{2k-1})\Big)}\right).
\end{align*}}
For $0< a\le b$ and $c\ge 0$, we have $\frac{a+c}{b+c}\ge\frac{a}{b}$.
So we can remove the $\sum_{j=1}^s(t_j-t_{j-1})$ terms to get
\begin{align*}
    \frac{J^\textsc{alg}}{J^\textsc{opt}} \ge 
    2\left(\frac{f(\mathcal{X}_1)+f(\mathcal{X}_2)}{2f(\mathcal{X}_1\cup\mathcal{X}_2)+\frac{f(\mathcal{X}_1)+f(\mathcal{X}_2)-f(\mathcal{X}_1\cup\mathcal{X}_2)}{s}}\right).
\end{align*}
For any given $\mathcal{X}_1,\mathcal{X}_2$, we can make $s$ arbitrarily large to get the limiting bound
\begin{align*} 
    \frac{J^\textsc{alg}}{J^\textsc{opt}} \ge 
    \left(\frac{f(\mathcal{X}_1)+f(\mathcal{X}_2)}{f(\mathcal{X}_1\cup\mathcal{X}_2)}\right),
\end{align*}
and since we can also choose $\mathcal{X}_1,\mathcal{X}_2$, we get the lower bound
\begin{align*}
    \frac{J^\textsc{alg}}{J^\textsc{opt}} \ge \frac{1}{\Gamma},
\end{align*}
where $\Gamma=\inf_{\mathcal{X},\mathcal{Y}\subseteq\mathcal{F}}\frac{f(\mathcal{X}\cup\mathcal{Y})}{f(\mathcal{X})+f(\mathcal{Y})}$. This completes the proof.
\end{proof}

\citet{BhimarajuV2022} have a lower bound for a special case where the samples are all homogenous, i.e.,
there exists a function $g:\mathbb{N}\cup\{0\}\mapsto\mathbb{R}^+\cup\{0\}$ such that $f(\mathcal{X})=g(|\mathcal{X}|)$.
Using our definition of $\Gamma$, their lower bound gives
$1+\left(\sqrt{\Gamma^2-2\Gamma+2}-\Gamma\right)$, for all feasible values of $\Gamma$ (i.e., $\Gamma\in\left[\frac{1}{2},1\right]$).
See Fig.~\ref{fig:bound-comparison-text} for a plot with both the bounds.
\begin{figure*}[h]
    \begin{subfigure}[t]{0.5\textwidth}
        \centering
        \includegraphics[width=\textwidth]{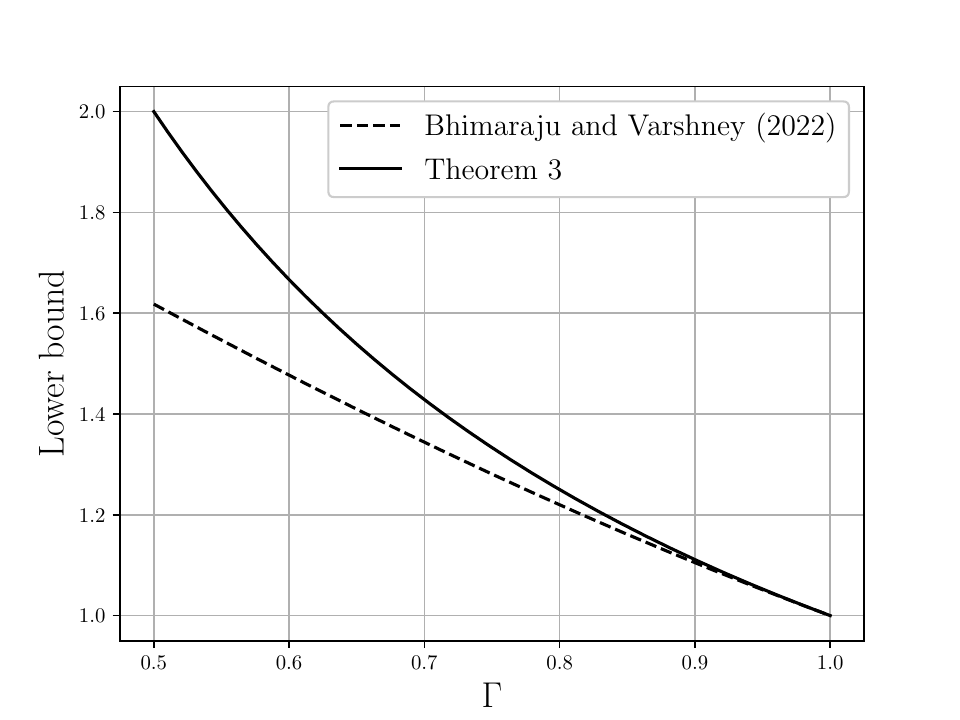}
        \caption{}
        \label{fig:bound-comparison}
    \end{subfigure}%
    \begin{subfigure}[t]{0.5\textwidth}
        \centering
        \includegraphics[width=\textwidth]{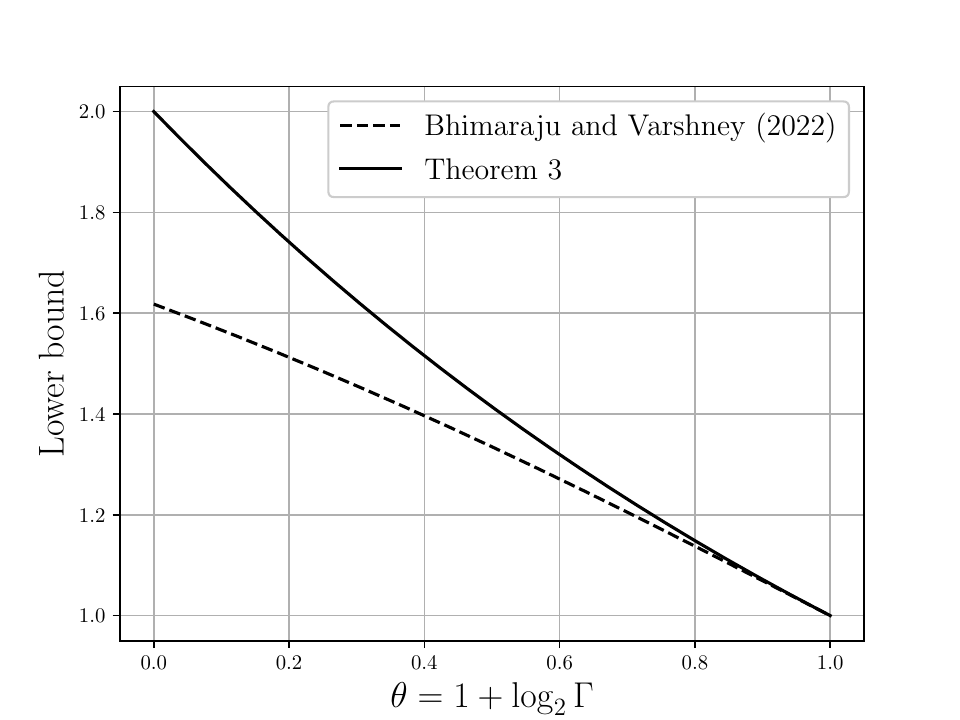}
        \caption{}
        \label{fig:bound-comparison2}
    \end{subfigure}
    \caption{Comparison of the bound in Theorem~\ref{thm:lower-bound} with the result in \citet{BhimarajuV2022}. For the setting in \citet{BhimarajuV2022}, the processing cost is just a function of the number of processed elements, and we have $\Gamma=\inf_{x\in\mathbb{N}}\frac{f(2x)}{2f(x)}\in\left[\frac{1}{2},1\right]$ for
    concave functions using our definition of $\Gamma$. We can see that Theorem~\ref{thm:lower-bound} is stronger for all values of $\Gamma$ in this domain.
    In Fig.~\ref{fig:bound-comparison}, we plot the two bounds as a function of $\Gamma$.
    In Fig.~\ref{fig:bound-comparison2}, we consider the processing-cost functions $f(x)=x^\theta$ for $\theta\in[0,1]$.
    This gives us $\Gamma=2^{\theta-1}$, and we plot the two bounds as a function of $\theta$.
    }
    \label{fig:bound-comparison-text}
\end{figure*}

We also note that Theorem~\ref{thm:lower-bound} is a generalization of the lower bound in \citet{DoolyGS2001}.
The setting in \citet{DoolyGS2001} is equivalent to fixing the processing-cost function to be equal to a constant.
In this case, we have $\Gamma=\frac{1}{2}$, and the lower bound we get is $\frac{1}{\Gamma}=2$, which matches the 
bound of \citet{DoolyGS2001}.
However, Theorem~\ref{thm:lower-bound} is applicable to any function which satisfies Assumption~\ref{assm:f-assumptions},
and is thus much more general.
\section{Numerical Experiments}
\label{sec:simulations}

\secondrevision{
In this section, we empirically evaluate our online algorithms. 
While the exact cost functions may differ substantially by application, our aim here is to investigate performance in a more general sense.
Thus we use cost functions that depend only on the number of arrivals, i.e.,
$f(\mathcal{X})=\tilde{f}(|\mathcal{X}|)$ for some $\tilde{f}$.
Specifically, the following are used depending on the experiment:
(i) $\tilde{f}(n)=\sqrt{n}$, (ii) $\tilde{f}(n)=\log(1+n)$, and (iii) $\tilde{f}(n)=\min\{3n,10\}$.
These span a relatively wide range of $\Gamma$ from $\frac{1}{\sqrt{2}}$ for (i) to $\frac{1}{2}$ for (ii) and (iii).
Further, while $\log(1+n)$ and $\min\{3n,10\}$ both have $\Gamma=\frac{1}{2}$,
$\log(1+n)$ achieves its minimal diminishing costs asymptotically, while $\min\{3n,10\}$ saturates at $n=4$.
We believe these settings cover a wide range of scenarios that might occur in practice.
}

\subsection{\revision{Simulated data}}

\revision{For the simulations, we use both homogeneous and inhomogeneous Poisson processes to generate the arrivals.}
We refer to \cite{Ross2023} for details on how to simulate such arrivals.

\revision{In selecting the simulation experiments to perform, we aim to evaluate the empirical performance of the algorithms under different processing cost functions, arrival rates over time, and total numbers of samples. Fig.~\ref{fig:sqrt-log-cost} varies the number of samples $n$, while Fig.~\ref{fig:sqrt-log-cost-rate} varies the arrival rate $\lambda$. Fig.~\ref{fig:vs-wte} compares the performance of \textsc{WTA} with \textsc{WTE} \citep{BhimarajuV2022}, and Fig.~\ref{fig:alpha} compares the performance of the algorithm for different values of the parameter $\alpha$ in \textsc{WTA}. Finally, Fig.~\ref{fig:changing-lambda} considers the performance of \textsc{WTA} for a time-inhomogeneous arrival rate. We discuss the results in detail below.
}

\begin{figure*}[h]
    \begin{subfigure}[t]{0.5\textwidth}
        \centering
        \includegraphics[width=\textwidth]{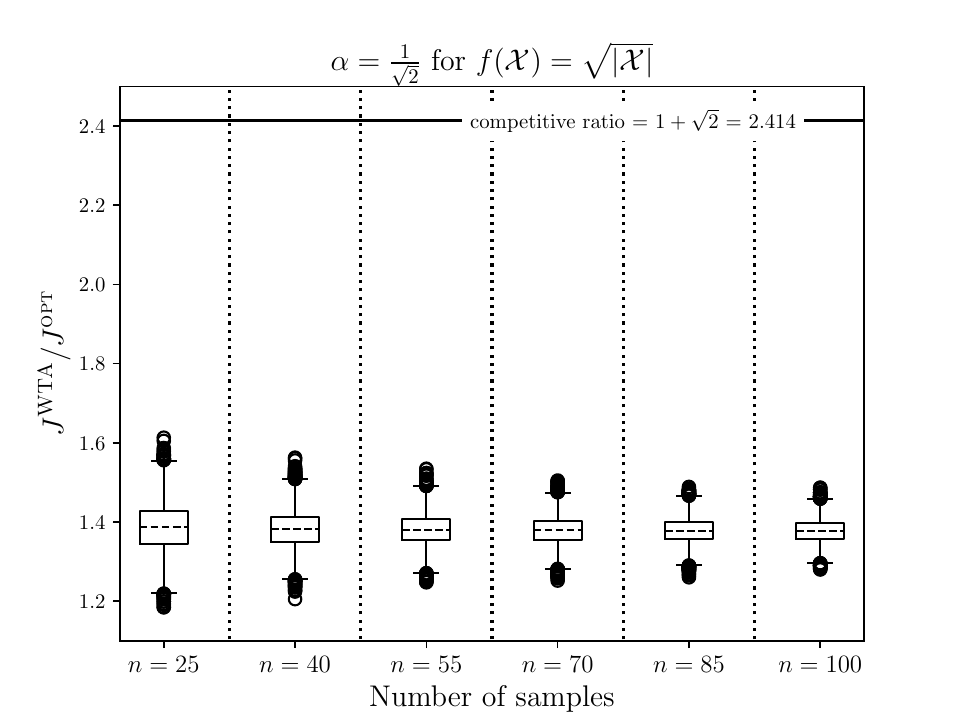}
        \caption{}
        \label{fig:sqrt-cost}
    \end{subfigure}%
    \begin{subfigure}[t]{0.5\textwidth}
        \centering
        \includegraphics[width=\textwidth]{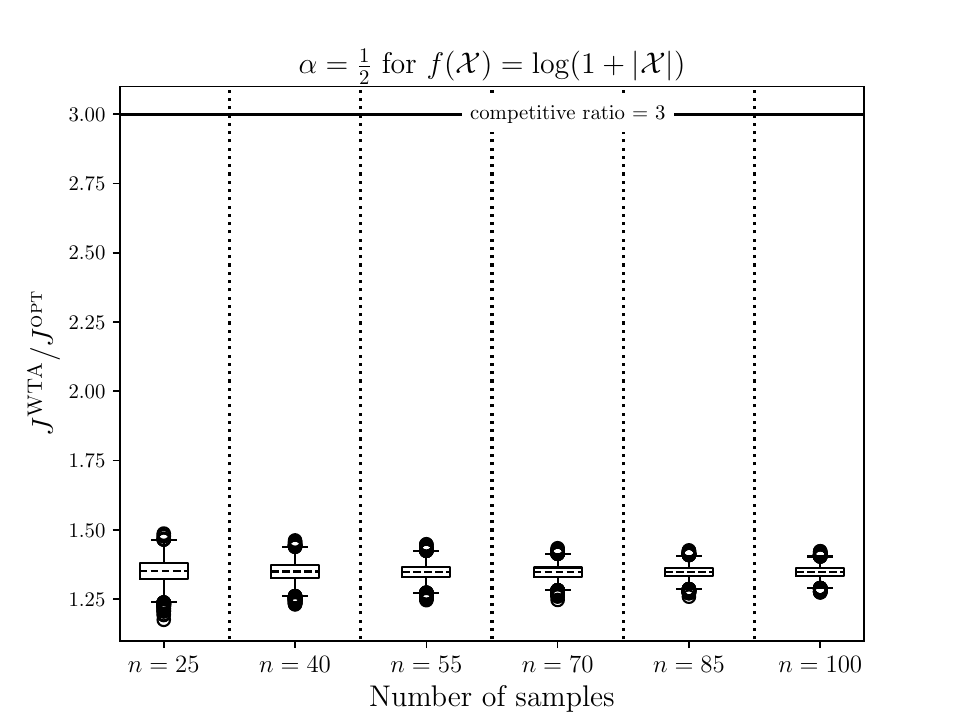}
        \caption{}
        \label{fig:log-cost}
    \end{subfigure}
    \caption{Performance of \textsc{WTA} over a varying number of samples. We use a Poisson process with rate $\lambda=2$ to generate $n$ arrivals. The value of $\alpha$ for \textsc{WTA} is set based on the ``optimal'' value given by Corollary~\ref{cor:wte}(iii). We plot the results over $10000$ trials for each value of $n$.
    }
    \label{fig:sqrt-log-cost}
\end{figure*} 
In Fig.~\ref{fig:sqrt-log-cost}, we show the performance of \textsc{WTA} over a varying
number of arrivals $n$ for two different cost functions: $f(\mathcal{X})=\sqrt{|\mathcal{X}|}$ in Fig.~\ref{fig:sqrt-cost} and $f(\mathcal{X})=\log(1+|\mathcal{X}|)$ in Fig.~\ref{fig:log-cost}.
In both cases, we generate the arrivals using a time-homogeneous Poisson process with a (constant)
rate $\lambda=2$.
For $f(\mathcal{X})=\sqrt{|\mathcal{X}|}$, we have $\Gamma=\inf_{x\in\mathbb{N}}\frac{f(2x)}{2f(x)}=\frac{1}{\sqrt{2}}$, and we use $\alpha=\frac{1}{\sqrt{2}}$ as suggested by Corollary~\ref{cor:wte}(iii).
Likewise, we have $\Gamma=\frac{1}{2}$ for $f(\mathcal{X})=\log(1+|\mathcal{X}|)$ and so we use $\alpha=\frac{1}{2}$.
For each value of $n$, we generate the arrivals $10000$ times and compute $\frac{J^\textsc{WTA}}{J^\textsc{opt}}$
for each instance ($J^\textsc{opt}$ can be calculated using Alg.~\ref{alg:osp}).
The distribution of $\frac{J^\textsc{WTA}}{J^\textsc{opt}}$ is then shown in Fig.~\ref{fig:sqrt-log-cost}.

We can see that in both Fig.~\ref{fig:sqrt-cost} and Fig.~\ref{fig:log-cost}, there is considerable gap
between the maximum value of $\frac{J^\textsc{WTA}}{J^\textsc{opt}}$ and the competitive ratio.
While Corollary~\ref{cor:wte}(iii) guarantees that the value of $\frac{J^\textsc{WTA}}{J^\textsc{opt}}$
would never be more than the competitive ratio, it seems to be the case that a tighter bound might be possible
for \textsc{WTA}.%
\footnote{Of course, there is the possibility that our $10000$ trials missed a very specific instance
where $\frac{J^\textsc{WTA}}{J^\textsc{opt}}$ is very close to the competitive ratio.}
Further, we can also see that the distribution of $\frac{J^\textsc{WTA}}{J^\textsc{opt}}$
concentrates around a certain value as the value of $n$ increases.
Note, however, that this does not reduce the competitive ratio for large values of $n$.
We can repeat an $n=25$ instance $4$ times (with an appropriate gap between the repetitions)
to get an instance with the same ratio of $\frac{J^\textsc{WTA}}{J^\textsc{opt}}$ for $n=100$.
However, such ``repeated'' instances are unlikely to occur when the arrivals follow a Poisson distribution.

\begin{figure*}[h]
    \begin{subfigure}[t]{0.5\textwidth}
        \centering
        \includegraphics[width=\textwidth]{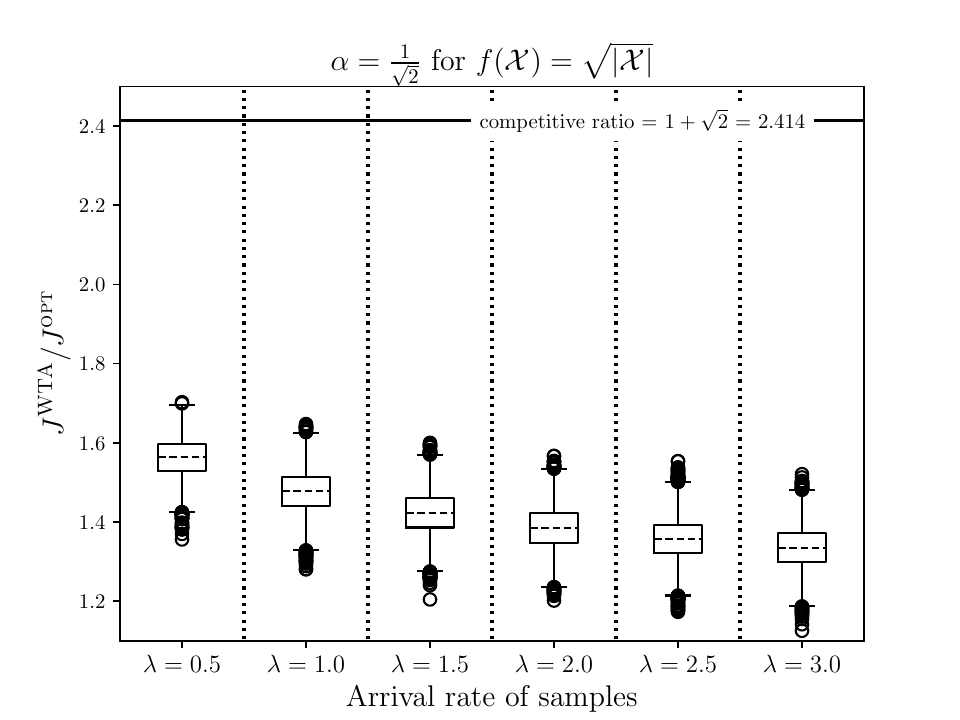}
        \caption{}
        \label{fig:sqrt-cost-rate}
    \end{subfigure}%
    \begin{subfigure}[t]{0.5\textwidth}
        \centering
        \includegraphics[width=\textwidth]{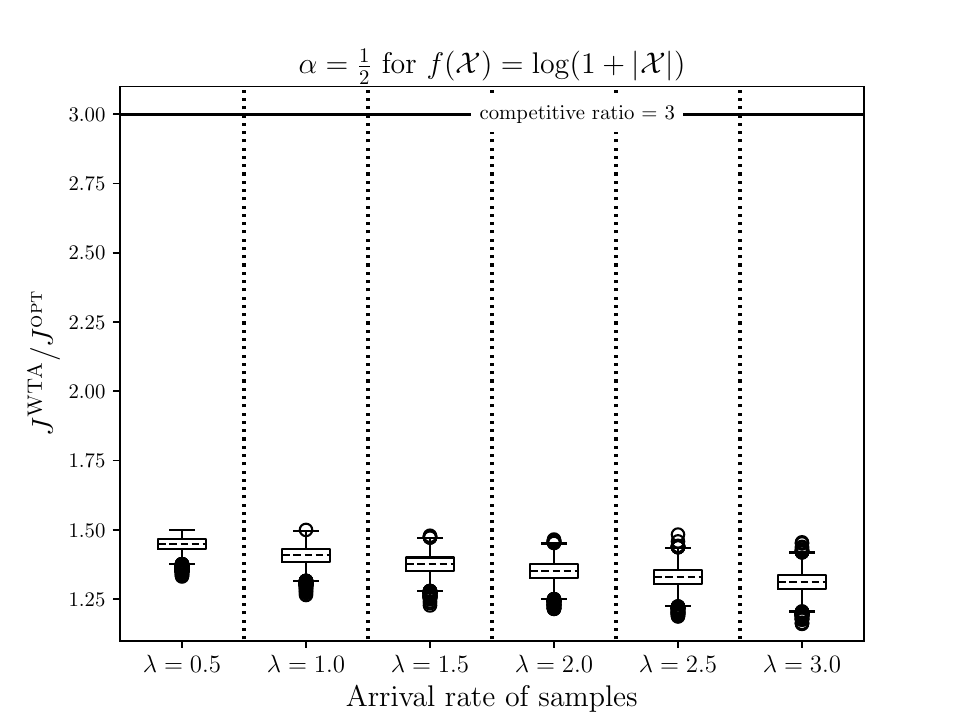}
        \caption{}
        \label{fig:log-cost-rate}
    \end{subfigure}
    \caption{Performance of \textsc{WTA} over a varying arrival rate of samples. We use a Poisson process with rate $\lambda$ to generate $n=30$ arrivals. The value of $\alpha$ for \textsc{WTA} is set based on the ``optimal'' value given by Corollary~\ref{cor:wte}(iii). We plot the results over $10000$ trials for each value of $\lambda$.
    }
    \label{fig:sqrt-log-cost-rate}
\end{figure*}
In Fig.~\ref{fig:sqrt-log-cost-rate}, we repeat the experiment of Fig.~\ref{fig:sqrt-log-cost},
but we fix the value of $n$ to $30$, and instead vary the value of $\lambda$.
We observe that the distribution of $\frac{J^\textsc{WTA}}{J^\textsc{opt}}$ trends downward
as $\lambda$ increases in both Fig.~\ref{fig:sqrt-cost-rate} and Fig.~\ref{fig:log-cost-rate}.
This is intuitively reasonable since a very high rate suggests that waiting for samples is better
than testing too soon since a new sample is probably just around the corner (due to the high rate).
Since \textsc{WTA} waits sufficiently long to give a constant competitive ratio for any possible
set of arrival times, the algorithm performs very well when the rate is high.
However, since the arrivals are stochastic and any possible set of arrival times have a non-zero
probability density, the ``competitive ratio'' is the same in all these cases.

\begin{figure*}[h]
    \begin{subfigure}[t]{0.5\textwidth}
        \centering
        \includegraphics[width=\textwidth]{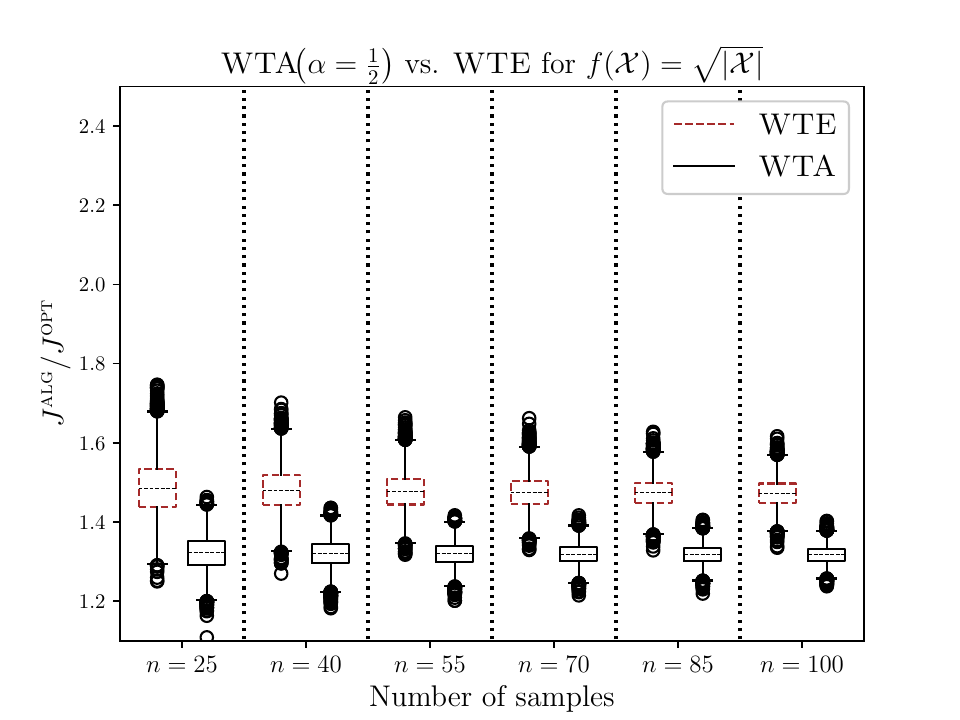}
        \caption{}
        \label{fig:vs-wte-n}
    \end{subfigure}%
    \begin{subfigure}[t]{0.5\textwidth}
        \centering
        \includegraphics[width=\textwidth]{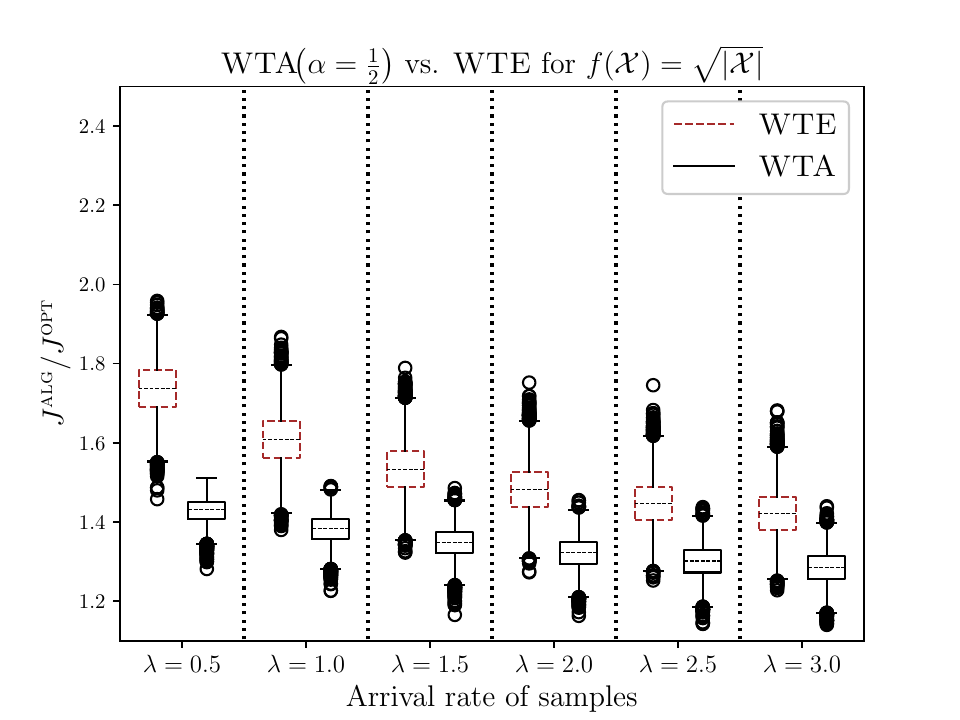}
        \caption{}
        \label{fig:vs-wte-rate}
    \end{subfigure}
    \caption{Comparison of the performance of \textsc{WTA} with $\alpha=\frac{1}{2}$ against \textsc{WTE} \citep{BhimarajuV2022}.
    We use $\lambda=2$ for Fig.~\ref{fig:vs-wte-n} and $n=30$ for Fig.~\ref{fig:vs-wte-rate},
    and plot the results over $10000$ trials for each value of $n$ and $\lambda$.
    }
    \label{fig:vs-wte}
\end{figure*}

In Fig.~\ref{fig:vs-wte}, we compare the performance of \textsc{WTA} (with $\alpha=\frac{1}{2})$ against
\textsc{WTE} \citep{BhimarajuV2022}.
We vary the value of $n$ in Fig.~\ref{fig:vs-wte-n} and the rate $\lambda$ in Fig.~\ref{fig:vs-wte-rate}.
The cost function is $f(\mathcal{X})=\sqrt{|\mathcal{X}|}$ in both cases.
Recall that \textsc{WTA} with $\alpha=1$ reduces to \textsc{WTE}.
We choose $\alpha=\frac{1}{2}$ because that is the function-agnostic value suggested by Corollary~\ref{cor:wte}(i).
Additionally, as we see subsequently in Fig.~\ref{fig:alpha}, this choice turns out to be better experimentally
than $\alpha=\frac{1}{\sqrt{2}}$, which has the best (theoretical) competitive ratio.
Fig.~\ref{fig:vs-wte-n} and Fig.~\ref{fig:vs-wte-rate} show that $\textsc{WTA}\left(\alpha=\frac{1}{2}\right)$ outperforms
\textsc{WTE} for all values of $n$ and $\lambda$.
As we see in Fig.~\ref{fig:vs-wte-rate}, the gains are particularly large for small values of $\lambda$,
where the distribution of $\frac{J^\textsc{WTA}}{J^\textsc{opt}}$
is much lower than the distribution of $\frac{J^\textsc{WTE}}{J^\textsc{opt}}$.
Fig.~\ref{fig:alpha-sqrt} suggests that $\textsc{WTA}\big(\alpha=\frac{1}{\sqrt{2}}\big)$ is likely to outperform
\textsc{WTE} as well, but by a smaller margin than $\textsc{WTA}\left(\alpha=\frac{1}{2}\right)$.

\begin{figure*}[h]
    \begin{subfigure}[t]{0.5\textwidth}
        \centering
        \includegraphics[width=\textwidth]{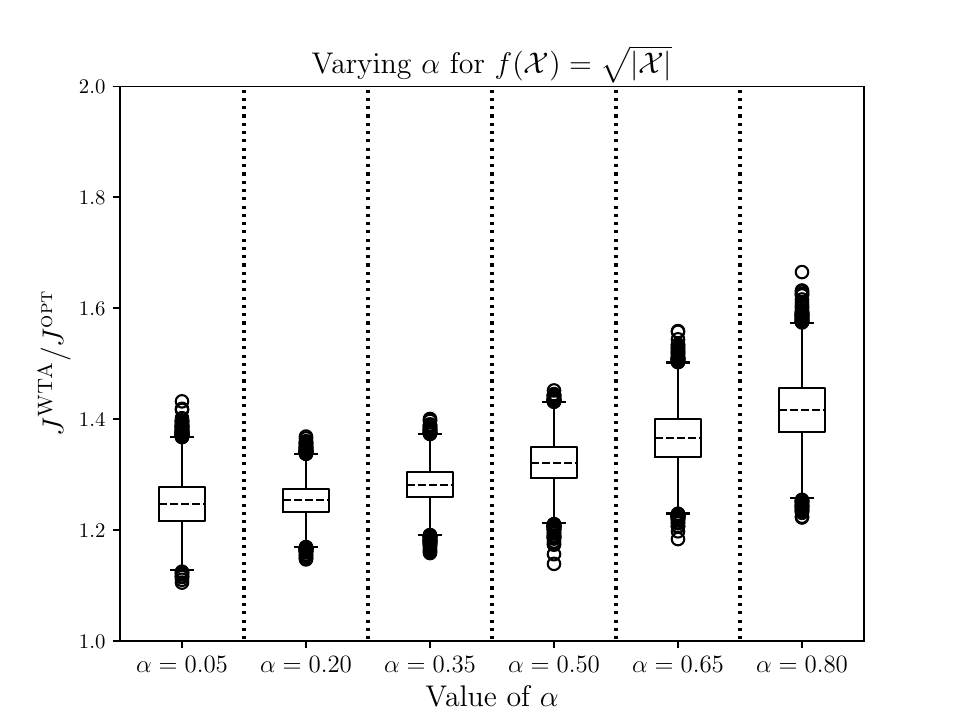}
        \caption{}
        \label{fig:alpha-sqrt}
    \end{subfigure}%
    \begin{subfigure}[t]{0.5\textwidth}
        \centering
        \includegraphics[width=\textwidth]{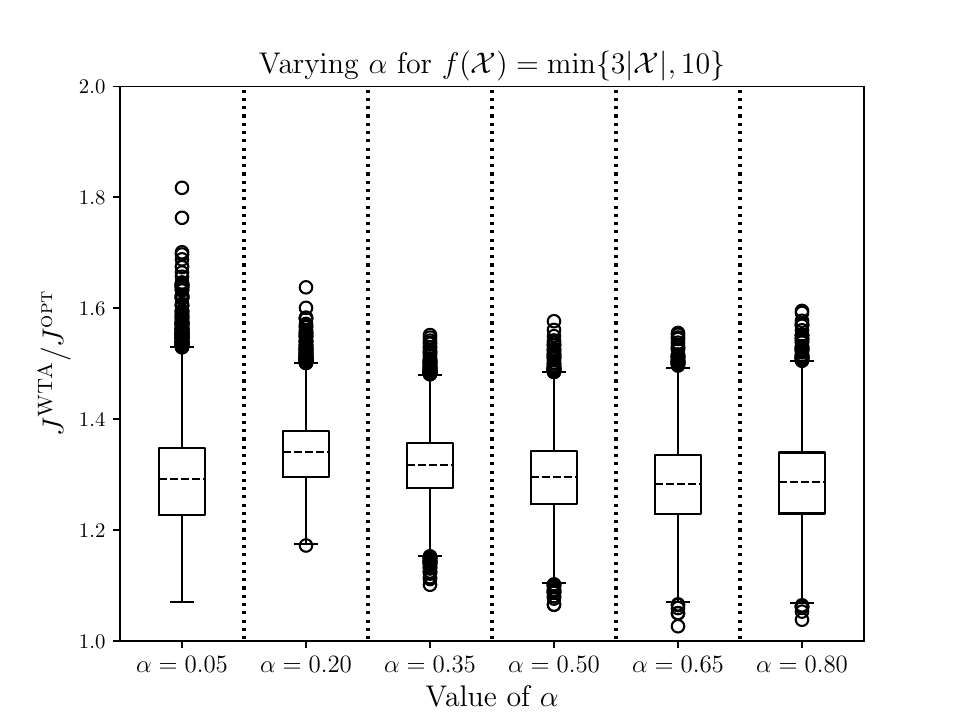}
        \caption{}
        \label{fig:alpha-sqrt-small}
    \end{subfigure}
    \caption{Performance of \textsc{WTA} for different values of $\alpha$. We use $\lambda=2$ and $n=30$ for all values of $\alpha$,
    and plot the results over $10000$ trials.
    }
    \label{fig:alpha}
\end{figure*}

In Fig.~\ref{fig:alpha}, we vary the value of $\alpha$ while fixing $n=30$ and $\lambda=2$.
Fig.~\ref{fig:alpha-sqrt} uses the cost function $f(\mathcal{X})=\sqrt{|\mathcal{X}|}$.
For the worst-case performance, we see in Fig.~\ref{fig:alpha-sqrt} that it improves
from $\alpha=0.05$ to $\alpha=0.2$ while worsening for increasing values of $\alpha$.
However, for some outlier instances, $\alpha=0.05$ performs better.
This is not surprising since if the arrivals are sufficiently spaced apart, the optimal
schedule processes them independently, which corresponds to $\alpha=0$.
We also observe that for $f(\mathcal{X})=\sqrt{|\mathcal{X}|}$, the value of $\alpha$ that has the best (theoretical) competitive
ratio according to Corollary~\ref{cor:wte}, i.e., $\alpha=\frac{1}{\sqrt{2}}$, is not the best practical choice.
While $f(\mathcal{X})=\log(1+|\mathcal{X}|)$ exhibits similar trends, we omit that function and instead use $f(\mathcal{X})=\min\{3|\mathcal{X}|,10\}$
in Fig.~\ref{fig:alpha-sqrt-small}, which has more interesting behavior.
Here, we see that one of the best empirical performances is for $\alpha=\frac{1}{2}$, which also has the
best theoretical guarantee.
We also see that $\alpha=0.65$ performs just as well, perhaps a little better for some outlier instances.

\begin{figure*}[h]
    \begin{subfigure}[t]{0.5\textwidth}
        \centering
        \includegraphics[width=\textwidth]{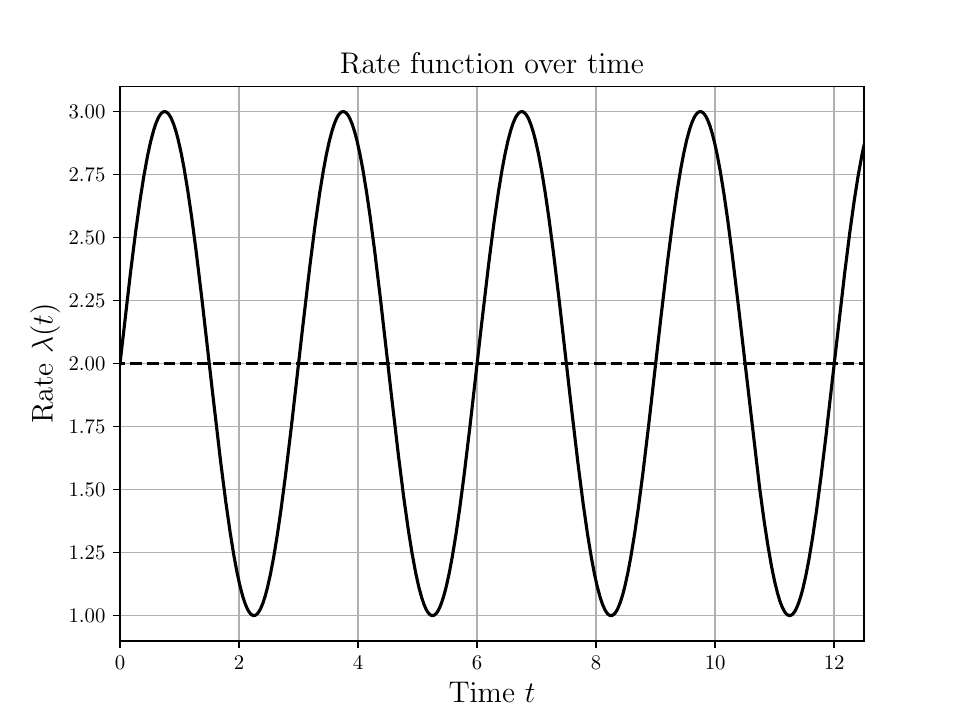}
        \caption{}
        \label{fig:changing-lambda-rate}
    \end{subfigure}%
    \begin{subfigure}[t]{0.5\textwidth}
        \centering
        \includegraphics[width=\textwidth]{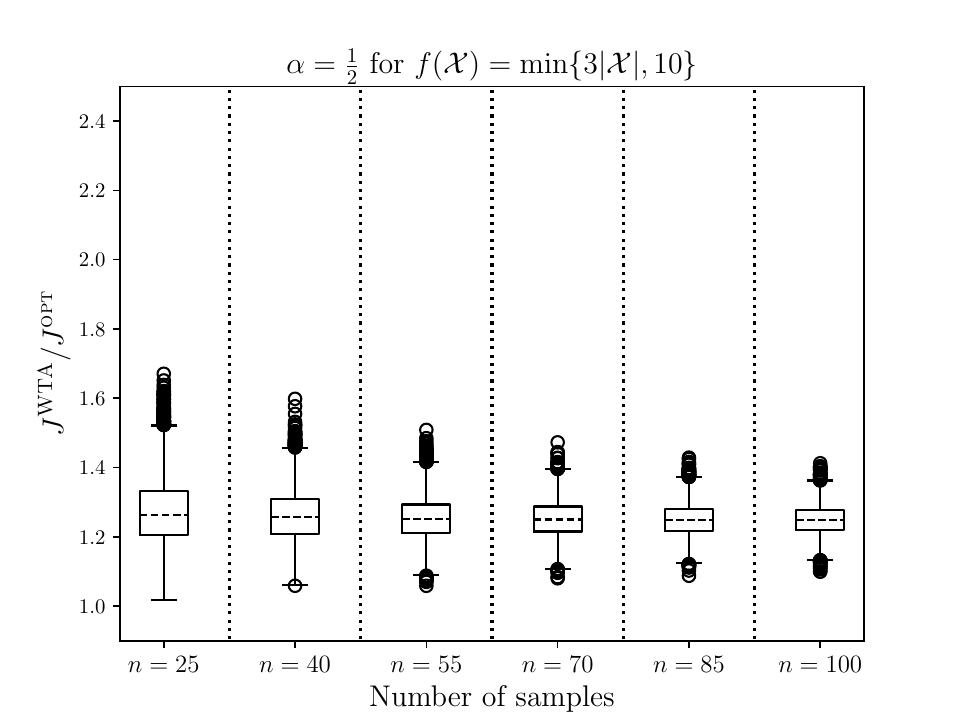}
        \caption{}
        \label{fig:changing-lambda-result}
    \end{subfigure}
    \caption{Performance of \textsc{WTA} for different values of $n$ with a time-inhomogenous arrival rate $\lambda(t)$. We use $\alpha=\frac{1}{2}$ which is the ``optimal'' choice for $f(\mathcal{X})=\min\{3|\mathcal{X}|,10\}$ according to Corollary~\ref{cor:wte}(iii) and plot the results over $ 10000$ trials.
    }
    \label{fig:changing-lambda}
\end{figure*}

While the arrivals so far have been homogeneous (constant-rate) Poisson processes, Fig.~\ref{fig:changing-lambda}
considers a time-inhomogeneous Poisson process.
We use the rate function $\lambda(t)=2+\sin\left(\frac{2\pi}{3}t\right)$ shown in Fig.~\ref{fig:changing-lambda-rate}.
We use the cost function $f(\mathcal{X})=\min\{3|\mathcal{X}|,10\}$ from Fig.~\ref{fig:alpha-sqrt-small}.
Fig.~\ref{fig:changing-lambda-result} shows the performance of \textsc{WTA} with $\alpha=\frac{1}{2}$ for the arrival
process $\lambda(t)$ for a varying $n$.
We see a trend similar to Fig.~\ref{fig:sqrt-log-cost}, where the distribution of $\frac{J^\textsc{WTA}}{J^\textsc{opt}}$
concentrates around a certain value as $n$ increases.
Note that since $\alpha=\frac{1}{2}$ and $\Gamma=\frac{1}{2}$, the competitive ratio guaranteed by Corollary~\ref{cor:wte} is $3$.

\subsection{Real-world data}

\begin{figure*}[h]
    \begin{subfigure}[t]{0.5\textwidth}
        \centering
        \includegraphics[width=\textwidth]{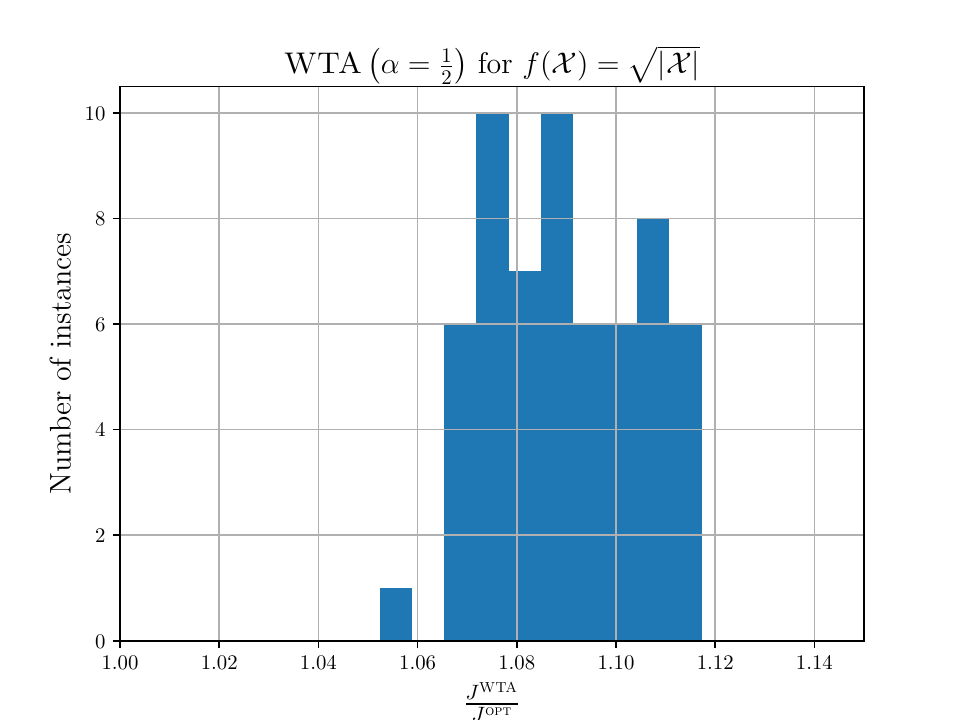}
        \caption{}
        \label{fig:sqrt-real-data}
    \end{subfigure}%
    \begin{subfigure}[t]{0.5\textwidth}
        \centering
        \includegraphics[width=\textwidth]{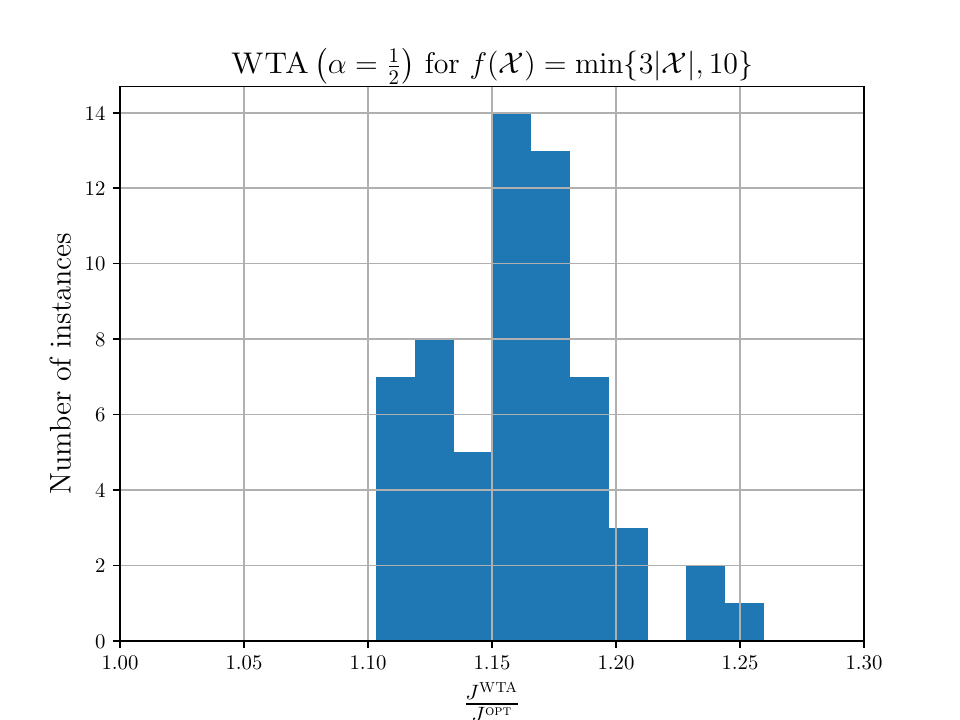}
        \caption{}
        \label{fig:min-real-data}
    \end{subfigure}
    \caption{Performance of \textsc{WTA} with $\alpha=\frac{1}{2}$ over the arrival data from \citet{BishopOOOO2018}.
    In Fig.~\ref{fig:sqrt-real-data} we use the processing cost function $f(\mathcal{X})=\sqrt{|\mathcal{X}|}$ and in Fig.~\ref{fig:min-real-data}
    we use the function $f(\mathcal{X})=\min\{3|\mathcal{X}|,10\}$.
    The best schedule is computed using \textsc{OSP} and we plot the histogram of $\frac{J^\textsc{WTA}}{J^\textsc{opt}}$ over the $60$ instances
    in the data.
    }
    \label{fig:real-data}
\end{figure*}

\revision{As real-world data, we use customer arrivals at three banks over four weeks, as collected by \citet{BishopOOOO2018}. We use this publicly available data because we expect some of our target applications to have similar arrival statistics.}
We consider each day of each week as a problem instance, giving us a total of $3\times4\times5=60$ instances
over which we report the statistics.
For each instance, customers arrive from $8$ AM to $3$ PM (i.e., over a period of $7$ hours), and
the average number of arrivals per day was $874.9$.
We can see in Fig.~\ref{fig:real-data} that the schedule given by \textsc{WTA}
is very close to the optimal schedule.
For all the instances considered here,
the cost of the \textsc{WTA} schedule was less than a factor $1.3$ away from the optimal.
We also observe that we have (marginally) better performance in Fig.~\ref{fig:sqrt-real-data} for
$f(\mathcal{X})=\sqrt{|\mathcal{X}|}$ compared to $f(\mathcal{X})=\min\{3|\mathcal{X}|,10\}$ in Fig.~\ref{fig:min-real-data}.

\section{Conclusions and Future Research Directions}
\label{sec:conclusion}

\revision{In this work, we give an online competitive algorithm for adaptively batching samples that arrive over time.}
We consider the average \thirdrevision{waiting} 
time of the samples plus the average
per-sample processing cost as the objective to minimize. We first give a polynomial-time algorithm that computes
the optimal offline schedule. We then consider the online problem and devise constant-competitive algorithms for general submodular processing-cost functions. We also give a lower bound on the competitive ratio
that no online algorithm can beat.

However, there is still a gap between the competitive ratio guaranteed by our online algorithm and the proposed lower bound. 
\revision{Future research to close this gap might include (i) more precise analysis to show that \textsc{WTA} has a better competitive ratio,
(ii) better lower bounds that show no online algorithm can achieve an even higher competitive ratio, or
(iii) new algorithms that outperform \textsc{WTA} and achieve an improved competitive ratio.
Finally, as our numerical experiments demonstrate, the competitive ratio is just a worst-case measure, so utilizing additional statistics
of the arrival sequence might lead to improved performance in specific scenarios.}
We believe all these have much potential for future research.

\revision{Another interesting direction is to consider the case where there are deadlines for each sample. In this case, the objective remains the same, but the algorithm needs to ensure that the samples are processed before their respective deadlines. Alternatively, the algorithm could be penalized for each sample processed after its deadline. This requires algorithms that do not process all the samples available at a given time, but rather prioritize samples that are close to their deadlines, leaving other samples for later batches. This problem is more challenging than the one considered in this work, and it would be interesting to see if the techniques developed here can be extended to this setting. 
}
\secondrevision{
While minimizing the weighted sum of completion times is known to be NP-hard in this case \citep{LenstraKB1977}, 
it is unclear whether our objective of \thirdrevision{waiting} 
time plus processing cost can be minimized in polynomial time.
}



\bibliographystyle{elsarticle-harv} 
\bibliography{refs}

\newpage
\appendix

\medskip
\section{Linear program for the offline problem}
\label{sec:linear-program-offline}

Since we formulated the batch-scheduling problem as one of finding the minimum-weight
path in a graph, we can write it as an integer linear program.
However, as we show, we do not need to solve with integral constraints, and a relaxed version
of the program over a convex polyhedron gives the same cost as the integer program.
More concretely, consider the problem of finding the minimum-weight path in the graph
defined in Sec.~\ref{sec:offline} and define the binary variables $x_{i,j}\in\{0,1\}$ for $i<j$
with $i\in\{1,2,\ldots,n\}$
and $j\in\{2,3,\ldots,n+1\}$.
Each batching schedule corresponds to an assignment of $0$s and $1$s to all $x_{i,j}$,
and we interpret this as $x_{i,j}=1$ if and only if the arrivals $\{i,i+1,\ldots,j-1\}$ are
processed in a single batch.
Additionally, arrivals before $i$ and after $j-1$ are not part of this batch.%
\footnote{An assignment of $0$s and $1$s to $x_{i,j}$ that violates this is not a valid assignment.
As we show, solving the linear program always results in a valid assignment.}
Let the edge weights $e_{i,j}$ be defined as we did in Sec.~\ref{sec:offline} for $e_{ij}$,
i.e., $e_{i,j}:=f(j-i)+\sum_{k=i}^{j-1}(a_{j-1}-a_k)$.

Consider the following integer linear program.
\begin{align}
    \underset{x_{i,j}\in\{0,1\}}{\text{minimize}}&\quad \frac{1}{n}\sum_{i<j} x_{i,j}e_{i,j} \tag{ILP} \label{eq:ilp-offline} \\
    \text{subject to}&\quad \sum_{j=2}^{n+1} x_{1,j} = 1\qquad(\text{for}\ i=1),\nonumber \\
    &\quad \sum_{j=i+1}^{n+1} x_{i,j} = \sum_{j=1}^{i-1} x_{j,i} \quad \text{for all}\ i\in\{2,3,\ldots,n\}.\nonumber
\end{align}
\begin{claim}
    The optimal solution of \eqref{eq:ilp-offline}, $\{x_{i,j}^*\}$, has a cost equal to $J^\textsc{opt}$, the cost
    of the optimal batching schedule.
    Further, this cost can be achieved by scheduling in such a way that we form a batch
    of arrivals $\{i,i+1,\ldots,j-1\}$ if and only if $x_{i,j}^*=1$.
    Solving \eqref{eq:ilp-offline} results in a solution where this batching gives a valid
    schedule (i.e., no arrival is assigned to multiple batches, and every arrival is assigned to at least one batch).
\label{claim:ilp-offline}
\end{claim}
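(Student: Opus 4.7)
The plan is to establish a bijection between feasible integer solutions of \eqref{eq:ilp-offline} and directed paths from $q_1$ to $q_{n+1}$ in the graph $G$ constructed in Section~\ref{sec:offline}, and then invoke Theorem~\ref{thm:offline}. First I would observe that the constraints of \eqref{eq:ilp-offline} are exactly the standard flow-conservation constraints for a unit $s$-$t$ flow in $G$ with source $q_1$ and sink $q_{n+1}$: the $i=1$ constraint injects one unit out of the source, and the conservation constraints at $i=2,\ldots,n$ propagate this flow. Summing these constraints and using the fact that no edges enter $q_1$ (so $\sum_{j<1}x_{j,1}=0$) telescopes to $\sum_{i=1}^{n}x_{i,n+1}=1$, so one unit arrives at the sink as well.

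Next I would show that the feasible 0/1 solutions are precisely the indicator vectors of simple directed paths from $q_1$ to $q_{n+1}$. Given such a path $(q_{r_1},q_{r_2},\ldots,q_{r_p})$ with $r_1=1$ and $r_p=n+1$, setting $x_{r_j,r_{j+1}}=1$ for $j=1,\ldots,p-1$ and all other variables to $0$ clearly satisfies the constraints. Conversely, since $G$ is a directed acyclic graph, any 0/1 unit $s$-$t$ flow admits a flow decomposition consisting of a single simple directed path (acyclicity rules out isolated cycles, and the unit value rules out multiple disjoint paths), which recovers an $s$-$t$ path in $G$. Under this bijection, the ILP objective $\frac{1}{n}\sum_{i<j}x_{i,j}e_{i,j}$ reduces to $\frac{1}{n}\sum_{j=1}^{p-1}e_{r_j,r_{j+1}}$, i.e., $\frac{1}{n}$ times the weight of the corresponding path.

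By Theorem~\ref{thm:offline}, the minimum weight path in $G$ has weight exactly $nJ^\textsc{opt}$, and selecting the batches $\mathcal{B}_j=\{r_j,r_j+1,\ldots,r_{j+1}-1\}$ at times $s_j=a_{r_{j+1}-1}$ achieves this cost. In the ILP language, this is exactly the batching prescribed by ``form a batch $\{i,\ldots,j-1\}$ iff $x_{i,j}^*=1$.'' Validity follows directly from the bijection: because $1=r_1<r_2<\cdots<r_p=n+1$ are strictly increasing consecutive indices, the sets $\mathcal{B}_j$ partition $\{1,2,\ldots,n\}$, so every arrival is assigned to exactly one batch and none is omitted.

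The main obstacle is the flow-decomposition step: one must argue that an integer feasible solution cannot contain stray components (e.g., disconnected cycles of indicator variables) that inflate the cost without corresponding to a genuine path. This is where acyclicity of $G$ is essential, since it eliminates cycles outright, while the unit value of the flow (derived from the $i=1$ constraint) forces a single-path decomposition. The remaining arguments — that edge weights match unnormalized schedule costs and that the objective coincides with $J^\textsc{opt}/n$ — are routine bookkeeping given the construction of \eqref{eq:edge-weight}.
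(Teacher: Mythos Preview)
Your proposal is correct. Both your argument and the paper's hinge on the same flow interpretation of the constraints, but you finish differently. You invoke flow decomposition in a DAG to conclude that every feasible $0/1$ point is the indicator of a single $q_1$-$q_{n+1}$ path, and then appeal to Theorem~\ref{thm:offline} to identify the optimal path cost with $nJ^\textsc{opt}$. The paper instead derives the ``cross-section'' identity
\[
\sum_{i=1}^{\hat{i}}\sum_{j=\hat{i}+1}^{n+1}x_{i,j}=1\qquad\text{for all }\hat{i}\in\{1,\ldots,n\},
\]
by summing the first $\hat{i}$ constraints, which directly shows each arrival lies in exactly one selected batch without passing through the graph $G$ or invoking Theorem~\ref{thm:offline}. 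Your route is more conceptual and reuses the shortest-path result already proved; the paper's route is self-contained and slightly more elementary, since the cross-section identity simultaneously handles validity (no arrival omitted or duplicated) and the cost comparison in one stroke. Note that your telescoping to obtain $\sum_{i=1}^n x_{i,n+1}=1$ is exactly the $\hat{i}=n$ case of the paper's identity, so the two arguments are close cousins.
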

\begin{proof}
We prove this claim in two parts:
(i) we show that there is a feasible solution which gives a cost of $J^\textsc{opt}$ ---
this guarantees that the optimal cost is not more than $J^\textsc{opt}$; and
(ii) we show that any (optimal) solution of \eqref{eq:ilp-offline} corresponds to a valid batching
schedule.
This implies that the optimal solution cannot have a cost less than $J^\textsc{opt}$,
since the the best possible schedule for the batching problem has a cost of $J^\textsc{opt}$.

For part (i), let a best batching schedule be given by the $m$ batches
$\mathcal{B}^\textsc{opt}_1,\mathcal{B}^\textsc{opt}_2,\ldots,\mathcal{B}^\textsc{opt}_m$.
For each $\mathcal{B}^\textsc{opt}_k$, define $p_k:=\min\{i:i\in\mathcal{B}^\textsc{opt}_k\}$
and $q_k:=\max\{i:i\in\mathcal{B}^\textsc{opt}_k\}+1$, and set $x_{p_k,q_k}=1$.
Set the remaining $x_{i,j}=0$.
From the definition of $e_{i,j}$, we can see that the cost of this schedule is equal to $\frac{1}{n}\sum_{i<j}x_{i,j}e_{i,j}$.
To show that this assignment of $0$s and $1$s to $\{x_{i,j}\}$ is feasible, observe the following.
Since $\mathcal{B}^\textsc{opt}_1$ includes the first arrival, the value of $x_{1,\max\{\mathcal{B}^\textsc{opt}_1\}+1}$
would be set to $1$.
Further, no other batch contains the first arrival, so the sum $\sum_{j=2}^{n+1}x_{1,j}$ is equal to $1$.
This is sufficient to satisfy the first ($i=1$) constraint.
For each $i\in\{2,3,\ldots,n\}$, the only way the right side of the second constraint is nonzero is
if the $(i-1)$th arrival was the last arrival of some batch $\mathcal{B}^\textsc{opt}_k$.
In this case, the right side is equal to $1$, but the left side is also equal to $1$ since 
the $i$th arrival is now the first sample of $\mathcal{B}^\textsc{opt}_{k+1}$.
So all the constraints are satisfied and this solution is feasible.
Thus the optimal cost of \eqref{eq:ilp-offline} is not more than $J^\textsc{opt}$.

Now we have to prove part (ii).
Assume that the solution of \eqref{eq:ilp-offline} is $\{x_{i,j}^*\}$,
and we batch the arrivals into batches $\{i,i+1,\ldots,j-1\}$ if and only if $x_{i,j}^*=1$.
Fixing an $i=\hat{i}\ (\in\{1,2,\ldots,n\})$ and adding up the constraints for $i=1,2,\ldots,\hat{i}$ gives%
\footnote{For $\hat{i}=1$, the summation on the right side is over an empty set, and is taken to be $0$.}
\begin{align*}
    \sum_{i=1}^{\hat{i}}\sum_{j=i+1}^{n+1}x_{i,j} = 1 + \sum_{i=1}^{\hat{i}}\sum_{j=1}^{i-1}x_{j,i}.
\end{align*}
Every $x_{i,j}$ term on the right side in the above equation also occurs on the left side, and cancelling these gives
\begin{align}
    \sum_{i=1}^{\hat{i}}\sum_{j=\hat{i}+1}^{n+1}x_{i,j} = 1\quad \forall\ \hat{i}\in\{1,2,\ldots,n\}.
\label{eq:cross-section-is-1}
\end{align}
Since $\{x_{i,j}^*\}$ (optimally) solves \eqref{eq:ilp-offline}, it must satisfy \eqref{eq:cross-section-is-1}.
Since all $x_{i,j}$ are binary variables, \eqref{eq:cross-section-is-1} implies for any $\hat{i}$ that
there is exactly one $(i,j)$ with $i\le\hat{i}$ and $j>\hat{i}$ such that $x_{i,j}=1$.
So each arrival $\hat{i}$ is assigned to exactly one and only one batch.
This proves that the optimal solution $\{x_{i,j}^*\}$ always results in a valid batching schedule
when we schedule it according to the scheme given in the claim, which concludes the proof.
\end{proof}

Solving \eqref{eq:ilp-offline} directly using integer linear-programming solvers
for large problem instances is intractable due to
the integer constraints $x_{i,j}\in\{0,1\}$.
However, we can relax these constraints to the convex $x_{i,j}\ge0$, and it turns out that there is an optimal solution
to this (convex) linear program where $x_{i,j}$ is binary.%
\footnote{If a solver returns a solution that is not binary, it indicates some
degeneracy in the problem with multiple optimal solutions.
This can be fixed by randomly perturbing $\{e_{i,j}\}$ by an infinitesimal amount.}
Since the domain of $\{x_{i,j}\ge0\}$ is a superset of $x_{i,j}\in\{0,1\}$,
this solution of the relaxed linear program is thus an optimal solution of \eqref{eq:ilp-offline} as well.
We state this relaxed program formally as \eqref{eq:ilp-offline-relaxed} and the equivalence as Claim~\ref{claim:ilp-offline-relaxed}.
\begin{align}
    \underset{x_{i,j}\ge0}{\text{minimize}}&\quad \frac{1}{n}\sum_{i<j} x_{i,j}e_{i,j} \tag{LP} \label{eq:ilp-offline-relaxed} \\
    \text{subject to}&\quad \sum_{j=2}^{n+1} x_{1,j} = 1\qquad(\text{for}\ i=1),\nonumber \\
    &\quad \sum_{j=i+1}^{n+1} x_{i,j} = \sum_{j=1}^{i-1} x_{j,i} \quad \text{for all}\ i\in\{2,3,\ldots,n\}.\nonumber
\end{align}
\begin{claim}
    There is an optimal solution of \eqref{eq:ilp-offline-relaxed}, $\{\tilde{x}_{i,j}\}$,
    that satisfies $\tilde{x}_{i,j}\in\{0,1\}$.
\label{claim:ilp-offline-relaxed}
\end{claim}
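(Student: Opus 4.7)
\medskip

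The plan is to recognize the LP \eqref{eq:ilp-offline-relaxed} as a shortest-path (equivalently, a unit min-cost flow) linear program on the directed acyclic graph $G$ from Sec.~\ref{sec:offline}, and to invoke the total unimodularity of its constraint matrix to guarantee an integer optimal vertex. First I would collect the constraints into matrix form $A x = b$, where the columns of $A$ are indexed by arcs $(i,j)$ with $i < j$ and the rows by nodes $i \in \{1, 2, \ldots, n\}$. By inspection, each column $x_{i,j}$ has a $+1$ in row $i$ and (when $j \le n$) a $-1$ in row $j$, while the right-hand side $b$ equals $1$ in row $1$ and $0$ in the remaining rows. In other words, $A$ is the node-arc incidence matrix of $G$ with the row corresponding to the sink $q_{n+1}$ deleted.

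Next I would appeal to the classical result that the node-arc incidence matrix of any directed graph is totally unimodular, a property preserved under deletion of rows. Since the right-hand side $b$ is integer-valued and $A$ is totally unimodular, every basic feasible solution of \eqref{eq:ilp-offline-relaxed} is integer-valued. The edge weights $e_{i,j}$ are nonnegative by construction (because $f(\cdot) \ge 0$ and $a_{j-1} \ge a_k$ for $k \le j-1$), so the objective is bounded below by $0$, and Claim~\ref{claim:ilp-offline} already exhibits a feasible point; hence the LP attains its optimum at a basic feasible solution, which is therefore integral.

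Finally I would promote integrality to $\{0,1\}$-valuedness by tracing flow conservation. The first equation forces $\sum_{j=2}^{n+1} \tilde{x}_{1,j} = 1$ with all $\tilde{x}_{1,j} \in \mathbb{Z}_{\ge 0}$, so exactly one arc out of node $1$ carries value $1$ and the rest are $0$. Inductively, flow conservation at each intermediate node $i$ guarantees that the total integer flow leaving $i$ equals the total integer flow entering $i$, which never exceeds the single unit injected at the source. Combined with nonnegativity, this forces every component $\tilde{x}_{i,j} \in \{0,1\}$, as desired.

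The main obstacle is really just pinning down that the polyhedron defined by \eqref{eq:ilp-offline-relaxed} coincides with the unit-flow polytope on $G$; once that identification is established, the conclusion follows from the well-known TU argument together with the short flow-conservation check above. A self-contained alternative, should one wish to avoid invoking TU, would be to start from any LP optimum and decompose its nonnegative support into unit-flow paths from $q_1$ to $q_{n+1}$ (always possible in a DAG) and then keep only the cheapest such path, producing an integer optimum directly; but the TU route is cleaner and more standard, and I would present that one.
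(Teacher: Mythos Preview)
Your argument is correct, but it proceeds along a genuinely different line from the paper's own proof. The paper writes down the dual \eqref{eq:ilp-offline-relaxed-dual}, solves it via the backward recursion $\lambda_i=\min_{j>i}\{e_{i,j}/n+\lambda_j\}$ with $\lambda_{n+1}=0$, and then uses the arg-min indices $r_i$ to build an explicit $\{0,1\}$-valued primal whose objective equals $\lambda_1$; weak duality then certifies optimality of both. Your route instead identifies the constraint matrix as (a row-deleted) node--arc incidence matrix, invokes total unimodularity to get an integer optimal vertex, and finishes with a short flow-conservation check to rule out values $\ge 2$. The paper's approach is more constructive---it hands you the optimal binary solution and, as the paper remarks, makes the connection to the shortest-path dynamic program transparent. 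Your approach is cleaner and more standard from a polyhedral viewpoint, avoids computing the dual, and generalizes immediately to any min-cost flow reformulation. One small sharpening: your inductive claim that the flow entering node $i$ ``never exceeds the single unit injected at the source'' is easiest to justify via the cut identity $\sum_{i\le \hat{i}<j}x_{i,j}=1$ (obtained by summing the first $\hat{i}$ constraints), which is exactly what the paper uses in the proof of Claim~\ref{claim:ilp-offline}; you may want to state that explicitly rather than leave it to intuition.
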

\begin{proof}
For each constraint in \eqref{eq:ilp-offline-relaxed}, define the dual variable $\lambda_i$ for $i\in\{1,2,\ldots,n\}$.
The dual program of \eqref{eq:ilp-offline-relaxed}, \eqref{eq:ilp-offline-relaxed-dual} is given by
\begin{align}
    \underset{\lambda_i\in\mathbb{R}}{\text{maximize}}&\qquad \lambda_1 \tag{DP}\label{eq:ilp-offline-relaxed-dual} \\
    \text{subject to}&\quad \lambda_i \le \frac{e_{i,n+1}}{n}\quad \text{for}\ i\in\{1,2,\ldots,n\},\nonumber\\
    &\quad \lambda_i \le \frac{e_{i,j}}{n} + \lambda_j\quad \text{for}\ i<j\ \ (i,j\in\{1,2,\ldots,n\}).\nonumber
\end{align}
We prove Claim~\ref{claim:ilp-offline-relaxed} by showing that the objective value of a dual-feasible
solution is equal to that achieved by a binary solution for the primal program \eqref{eq:ilp-offline-relaxed}.
To see this, define $\lambda_{n+1}:=0$ so we can write both the constraints compactly as
\begin{align*}
    \lambda_i \le \frac{e_{i,j}}{n} + \lambda_j\quad \text{for} \ i<j\ \ (i,j\in\{1,2,\ldots,n+1\}).
\end{align*}
Since we are maximizing $\lambda_1$ in \eqref{eq:ilp-offline-relaxed-dual}, it will be equal to the right side of one of the inequalities
containing $\lambda_1$:
\begin{align*}
    \lambda_1 = \min_{j\in\{2,3,\ldots,n+1\}}\left\{\frac{e_{1,j}}{n} + \lambda_j\right\}.
\end{align*}
This implies that for the maximum value of $\lambda_1$, we need to find the maximum values
of $\lambda_j$ for $j\in\{2,3,\ldots,n\}$ (the value of $\lambda_{n+1}$ is fixed to $0$).
This suggests a natural recursion for computing the optimal value of $\lambda_1$.
Starting with the base case $\lambda_{n+1}=0$, we have for $i\in\{n,n-1,\ldots,1\}$:
\begin{align}
    \lambda_i = \min_{j\in\{i+1,i+2,\ldots,n+1\}}\left\{\frac{e_{i,j}}{n}+\lambda_j\right\}.
    \label{eq:lambda-recursion}
\end{align}
Let $r_i$ denote the index where the minimum in \eqref{eq:lambda-recursion} occurs:
\begin{align*}
    r_i = \arg\min_{j\in\{i+1,i+2,\ldots,n+1\}}\left\{\frac{e_{i,j}}{n}+\lambda_j\right\}.
\end{align*}
Use the following procedure to assign values to $\{\tilde{x}_{i,j}\}$:
\begin{algorithmic}[1]
    \State Initialize $i\gets1$ and $\tilde{x}_{p,q}\gets0$ for all $p<q$
    \While {$i \le n$}
        \State $\tilde{x}_{i,r_i}\gets1$
        \State $i\gets r_i$
    \EndWhile
\end{algorithmic}
From the definition of $r_i$, we have
\begin{align*}
    \lambda_i = \frac{e_{i,r_i}}{n} + \lambda_{r_i},
\end{align*}
which implies
\begin{align}
    \lambda_1 = \frac{1}{n}\sum_{i=1}^{n}\sum_{j=i+1}^{n+1}\tilde{x}_{i,j}e_{i,j}.
    \label{eq:primla-equals-dual}
\end{align}
Thus there is a dual-feasible solution which has the same cost as a binary solution for
the primal program \eqref{eq:ilp-offline-relaxed}.
We now just have to show that this binary solution satisfies all the constraints of \eqref{eq:ilp-offline-relaxed}
to conclude the proof.
The first constraint of \eqref{eq:ilp-offline-relaxed} is satisfied since $\tilde{x}_{1,r_1}=1$
and $\tilde{x}_{1,j}=0$ for $j\neq r_1$.
Further, observe that $\tilde{x}_{i,j}$ is nonzero if and only if $i$ and $j$ both occur in the
$\{r_1,r_{r_1},r_{r_{r_1}},\ldots\}$ sequence, and $j=r_i$.
This ensures that for each element in this sequence, there is exactly one term in the second constraint
of \eqref{eq:ilp-offline-relaxed} that is $1$.
All other terms are $0$ guaranteeing that the constraint is valid for all $i$, and thus $\{\tilde{x}_{i,j}\}$
is a primal-feasible solution.
So there is a feasible binary solution for the primal which has an objective equal
to a feasible solution for the dual (by \eqref{eq:primla-equals-dual}), which means
that these are the optimal solutions for both the primal and the dual \citep{BoydV2004}.
\end{proof}

Together with Claim~\ref{claim:ilp-offline}, Claim~\ref{claim:ilp-offline-relaxed} implies that
we can solve \eqref{eq:ilp-offline-relaxed} to find the optimal batching schedule.
This can be done using any off-the-shelf linear-program solver since \eqref{eq:ilp-offline-relaxed}
is just a standard linear program.
We also observe in the proof of Claim~\ref{claim:ilp-offline-relaxed} that solving the dual program bears a lot of similarity to how the minimum-weight path is computed once we have a topologically sorted order of the vertices in the graph.
The recursion in \eqref{eq:lambda-recursion} is essentially the same dynamic program that 
computes the minimum-weight path via topological sorting \citep{CormenLRS2022}.





\end{document}